\definecolor{shadecolor}{rgb}{0.9,0.9,0.9}
\definecolor{ocre}{RGB}{243,102,25} 
\definecolor{darkocre}{RGB}{121,51,12} 
\definecolor{lightocre}{RGB}{255,150,37} 
\definecolor{verylightocre}{RGB}{255,204,50} 
\definecolor{soccerfield}{RGB}{107,142,35} 
\definecolor{lightgray}{RGB}{200,200,200} 
\definecolor{warmblue}{RGB}{51,102,153} 
\definecolor{lightwarmblue}{RGB}{105,141,198} 
\definecolor{sepia}{RGB}{112,66,20} 
\newcommand{\btkz}{\begin{tikzpicture}}
\newcommand{\etkz}{\end{tikzpicture}}
\newcommand{\calA}{{\mathcal A}}
\newcommand{\calB}{{\mathcal B}}
\newcommand{\calT}{{\mathcal T}}
\newcommand{\bbR}{{\mathbb R}}
\newcommand{\brk}[1]{\left(#1\right)}          
\newcommand{\deriv}[2]{\frac{d#1}{d#2}}
\newcommand{\pd}[2]{\frac{\partial#1}{\partial#2}}
\newcommand{\pdd}[2]{\frac{\partial^2#1}{\partial#2^2}}
\newcommand{\pddm}[3]{\frac{\partial^2#1}{\partial#2\,\partial#3}}
\newcommand{\secref}[1]{Section~\ref{#1}}
\newcommand{\defref}[1]{Definition~\ref{#1}}
\newcommand{\corrref}[1]{Corollary~\ref{#1}}
\newcommand{\beq}{\begin{equation}}
\newcommand{\eeq}{\end{equation}}
\newcommand{\bsplit}{\begin{split}}
\newcommand{\esplit}{\end{split}}
\newcommand{\baligned}{\begin{aligned}}
\newcommand{\ealigned}{\end{aligned}}
\newcounter{sect}
\providecommand{\R}{\bbR}
\newcommand{\textand}{\quad\text{ and }\quad}
\newcommand{\Textand}{\qquad\text{ and }\qquad}
\newtheorem{theorem}{Theorem}[section]
\newtheorem{lemma}{Lemma}[section]
\newtheorem{proposition}{Proposition}[section]
\newtheorem{corollary}{Corollary}[section]
\newtheorem{definition}{Definition}[section]
\newcommand{\specexercise}[1]{}
\newenvironment{proof}{{\flushleft \emph{Proof}:}}{\hfill\ding{110}}
\newenvironment{remark}{{\flushleft \fontfamily{pzc}\bfseries\large Remark:}}{}
\newcommand{\QQ}{\mathfrak{Q}}
\newcommand{\GG}{\mathcal{G}}
\newcommand{\G}{\mathrm{G}}
\newcommand{\B}{\mathcal{B}}
\renewcommand{\S}{\mathcal{S}}
\newcommand{\vp}{\varphi}
\newcommand{\e}{\varepsilon}
\newcommand{\id}{\operatorname{id}}
\newcommand{\LL}{\mathcal{L}}
\newcommand{\ks}{\kappa^*TS}
\newcommand{\Q}{\mathcal{Q}}
\newcommand{\ds}{\left.\frac{d}{ds}\right|_{s=0}}
\newcommand{\inc}{\,\lrcorner\,}
\newcommand{\hxi}{\xi^\S}
\newcommand{\xiQ}{\xi^\Q}
\DeclareMathOperator{\graf}{Graph}
\DeclareMathOperator{\divergence}{div}
\DeclareMathOperator{\Aff}{Aff}
\DeclareMathOperator{\Vol}{Vol}
\newcommand{\Emb}{\operatorname{Emb}}
\newcommand{\Hom}{\operatorname{Hom}}
\newcommand{\ext}{\operatorname{\mathbb{E}}}
\numberwithin{equation}{section}
\begin{document}

\title{Covariant Linearization of elasticity}
\author{Raz Kupferman \and Elihu Olami }



\maketitle
\abstract{In this paper we derive a general linearized theory for first-order continuum dynamics on manifolds with  particular application to incompatible elasticity. We adopt a global approach viewing the equations of motion as a $1$-form on the configuration space which is  the Banach manifold of  $C^1$ time-dependent embeddings of a body manifold $\B$ into a space manifold $\S$.  The linearization is done by differentiating the equations 1-form with respect to an affine connection which we construct and study extensively. We provide detailed coordinate computations for the linearized equations of a large class of problems in continuum dynamics on manifolds.}
\tableofcontents

\section{introduction}

The derivation of a linear theory from a nonlinear theorem is a central theme in mathematics, with innumerable applications in the  various sciences. In the context of continuum mechanics, and notably  in the  theory of elasticity, the linear theories actually preceded the nonlinear theories (see Maugin \cite{Mau16}). In fact, the equations of linear elasticity are commonly derived directly from the balance laws (assuming small deformations) (Gurtin \cite{Gur73}), rather than as  approximation to the nonlinear theory.


Linear theories of elasticity play several key roles in the analysis of nonlinear theories: (i) they serve as an intermediate step for  proving the existence and the uniqueness of solutions for nonlinear theories, (ii) solutions of nonlinear problems can sometimes be obtained as limits of sequences of solutions of linearized problems, and (iii) they serve as a central tool in stability analysis \cite{MH83}. 

The linearization of nonlinear continuum theories is nowadays a standard, however, its current scope does not fully cover the wealth of systems of current interest. To a large extent, existing linear theories address systems that are geometrically Euclidean. From a mathematical perspective, the state-space in continuum mechanics can be described as the embeddings of a body into a space, both viewed as differentiable manifolds. 

For example, in a class of elastic systems dealing with residually-stressed bodies, the body manifold is viewed as a smooth manifold endowed with a Riemannian metric; the metric represents local equilibrium distances and angles between neighboring material elements. A configuration is an embedding of the body manifold into the ambient space, which is usually assumed Euclidean, although, non-Euclidean ambient spaces are of relevance even without recurring to relativistic theories \cite{KOS17a}. When the geometries of the body of the ambient space are incompatible, there is no notion of stress-free reference configuration, hence the very notion of small deformations is not naturally defined as it is when both body and space are assumed Euclidean. Incompatible elasticity is just one example in which complex geometries interact in a non-trivial way with mechanical laws and material properties.

%
%

Physical theories in which non-Euclidean geometry plays a central role are best formulated in a covariant manner, i.e., in a way that does not rely on a particular system of coordinates. The classical reference for the covariant linearization of elasticity theories is the book of Marsden and Hughes \cite{MH83}. Their starting point is a general notion of linearization, which we hereby define:

\begin{definition}
\label{def:linear}
Let $\pi:E\to M$ be a smooth (possibly infinite-dimensional) vector bundle endowed with a connection $\nabla$. Let $s\in\Gamma(E)$ be a $C^\infty$-section of $E$. The linearization of $s$ at $x\in M$  is an affine mapping $L_x(s):T_xM\to E_{s_x}$ given by 
\[
L_x(s)(w)=s_x + (\nabla_w s)_x,\quad \forall w\in T_xM.
\]
\end{definition}

Marsden and Hughes formulate the equations of nonlinear elasticity as a section of an infinite-dimensional vector bundle over the manifold of configurations and compute their linearization for a general class of constitutive relations.  In their calculation, however, it is implicitly assumed that the ambient space is Euclidian, hence that the manifold of configurations is a vector space. This assumption is reflected in the linearization of the acceleration vector field and more subtly, in the linearization of the stress tensor. Accounting for a non-Euclidean ambient space is not just a matter of technicalities, which might be overcome, for example, by adopting a local coordinate system. A curved space affects the basic notion of inertia, and may destroy the symmetries that are at the heart of the classical derivation of continuum theories; this lack of symmetries reflects, for example, in the presence of so-called self-forces, which arise from interactions of the body with inhomogeneous geometric incompatibilities. 

%

Other approaches to covariant linearization can be found in Yavari and Ozakin \cite{YO08}, where the authors linearize the energy and momentum balance laws, and in \cite{GPM13} where linearization is computed around a normal state. 
 
In this paper, we derive a general linearized theory for first-order continuum dynamics on manifolds, with a particular application to incompatible elasticity. We adopt a global approach, where the space of configurations $\QQ$ is the Banach manifold of $C^1$ time-dependent embeddings of a body manifold $\B$ into a space manifold $\S$. In this setting, the equations of motion are a 1-form on the configuration space $\QQ$. The linearization of those equation is in the sense of \defref{def:linear}, where the connection $\nabla^{\Q^*}$ on the cotangent bundle of the configuration space is induced in a natural way from a given connection $\nabla^\S$ on the tangent bundle of the space manifold. 

In the global approach to continuum dynamics,  the equations of motion can be viewed as a natural generalization to Newton's  laws. Velocity is the time derivative of the configuration; the acceleration is the covariant time-derivative of the velocity field with respect to the connection $\nabla^\QQ$; the force field, which is a 1-form $F\in\Omega^1(\QQ)$, is composed of external loadings and internal forces, where the latter are determined by the material properties through a constitutive relation. The equations of motion is obtained by pairing the acceleration to the force via a Riemannian metric $\GG$ on the configuration space $\QQ$.

Generally, elements of $T^*\Q$ are represented  by vector-valued measures. Hence, the linearized equations of motion may be as singular as measures and in particular, assume no local differential form. However, in the case where the loadings and the constituting relations satisfy certain regularity properties, the equations of motion as well as their linearization have local forms. which we derive as well.

 
The structure of the paper is as follows: In \secref{sec:prelims} we discuss the geometric structure of the space $C^1(M,N)$ where $M$ is a compact smooth manifold and $N$ is a smooth manifold without boundary. We first introduce the Banach manifold structure of $C^1(M,N)$ and its tangent bundle $TC^1(M,N)$. Next, we construct a metric and connection on $TC^1(M,N)$. To this end, we assume that a Riemannian metric $G$ is given on the target space $N$ and that a volume form  $\theta$ is prescribed for the source manifold $M$. The connection $\nabla^\QQ$ is induced by a connection $\nabla^N$ for $TN$. We discuss the construction of $\nabla^\Q$ in detail, and show that if $\nabla^N$ is metric with respect to $G$ then so is $\nabla^\Q$ with respect to $\GG$. 
 
In \secref{sec:Elastodynamics}, we use the results of \secref{sec:prelims} to formulate  Newton's equations for continuum dynamics. We identify the configuration space $\QQ$ of time-dependent $C^1$-embeddings as an open subset of the manifold $C^1(I\times\B,\S)$. The  connection for $T\QQ$ gives a notion of covariant derivative that defines the acceleration, whereas the metric for $T\QQ$ pairs the acceleration with force.  The force part of the equation is induced by a constitutive relation (which is assumed time-independent) and a loading;  the whole equation is viewed as a section of the cotangent bundle of the configuration space.
 
In \secref{sec:linearization}, we derive the linearized form of the nonlinear equations of motion derived in \secref{sec:Elastodynamics}. We first obtain a general expression for general, time-independent constitutive relations. We then derive a local differential representation for the case of a smooth constitutive relation; the linearized equations are formulated both in a covariant manner and in local coordinates.
 

 \section{Geometric preliminaries}
\label{sec:prelims}

In this section we present the geometric foundations for continuum dynamics on manifolds. We start by briefly recalling the notion of jets, which are the covariant constructs for encoding functions along with their derivatives.

\subsection{Jet bundles}

\begin{definition}
Let $M$ and $N$ be smooth manifolds of dimensions $m$ and $n$. A \emph{$1$-jet} from $M$ to $N$ is an equivalence class of triples $(f,U,p)$, where $p\in M$, $U\subset M$ is a neighborhood of $p$ and $f\in C^1(U,N)$. Two triples $(f,U,p)$ and $(g,V,q)$ are equivalent if
\begin{enumerate}
\item $p=q$.
\item $f(p)=g(q)$.
\item There exists local charts in $M$ and $N$, with respect to which the local representatives of $f$ and $g$ have the same values and first derivatives at $p$.
\end{enumerate} 
Equivalently, $(f,U,p)$ and $(g,V,q)$ are equivalent if 
\[
(Tf)_p=(Tg)_q
\]
where $(Tf)_p:T_pM\to T_{f(p)}N$ is the tangent map of $f$ at $p$. 
We denote the $1$-jet of $f$ at $p$ by 
\[
[(f,U,p)]=j^1_pf.
\]
\end{definition}

\begin{remark}
The third condition in the definition of a 1-jet implies that $f$ and $g$ have the same values at $p$ and the same first derivatives at $p$ with respect to \emph{any} local coordinate charts. 
\end{remark}

We denote by $J^1(M,N)$ the set of all $1$-jets from $M$ to $N$. The set $J^1(M,N)$ can be given the structure of a smooth manifold of dimension $m+n+mn$; it is also a fiber bundle over $M$ with respect to the (source) projection map 
\[
\pi^1:J^1(M,N)\to M,\quad j^1_pf\mapsto p.
\]
Let $\pi:E\to M$ be a smooth vector bundle over $M$. Define 
\[
J^1(E)=\{j^1_ps ~:~ \text{$p\in M$ and $s$ is a local $C^1$-section of $E$ at $p$} \}.
\]
Then $\pi^1:J^1(E)\to M$ is a vector bundle over $M$. The first jet extension,
\[
j^1:C^1(E)\to C^0(J^1(E)),\quad s\mapsto j^1s,
\]
is a linear immersion. 

\subsection{The manifold $C^1(M,N)$}
\label{subsec:manifold}

Let $M$ be a smooth, compact, orientable $d$-dimensional manifold, and let $N$ be a smooth orientable $m$-dimensional manifold without boundary endowed with a Riemannian metric $G$. 
Let $C^1(M,N)$ be  the space of $C^1$ mappings $M\to N$. Endow $C^1(M,N)$ with the  Whitney $C^1$ -topology \cite{Mic80}, a subbase of which consists of sets of the form
\[
\{f\in C^1(M,N) ~:~ j^1f(M)\subset U    \},\quad U\subset J^1(M,N)\,\text{ is open}.
\]
Loosely speaking, the Whitney $C^1$-topology is the topology of uniform convergence of the function and its first derivative. 

The space $C^1(M,N)$ is not a vector space, since $N$ is not a linear space. However, $C^1(M,N)$ can be given a structure of an infinite-dimensional Banach manifold: a topological space locally homeomorphic to a Banach space and equipped with a smooth structure (see Lang \cite{Lan99}). 

Given a mapping $\kappa\in C^1(M,N)$, a coordinate chart for $C^1(M,N)$ at $\kappa$ is constructed as follows: Let $\nabla^N$ be the Levi-Civita connection of $G$ and let $\exp^N:D\to S$ be the  corresponding exponential map, where $D\subset TN$ is a neighborhood of the zero section of $TN$, such that 
\[
(\pi_N,\exp^N):D\to N\times N
\]
is an embedding (i.e., a diffeomorphism onto its image). Let 
\[
(\kappa^*\pi_N,\kappa^*\exp):\kappa^*D\to M\times N
\]  
be the embedding induced by the pullback with $\kappa$, and denote its image by $V_\kappa$. Then, the canonical chart at $\kappa$ 
\[
\phi_\kappa: C^1(\kappa^*D)\to U_\kappa,\qquad U_\kappa=\{f\in C^1(M,N) ~:~ \graf(f)\in V_\kappa\}
\]
is given by 
\beq
\label{canonicalchart}
\phi_\kappa(v)(p)=\exp(v_p).
\eeq
It's inverse $\phi_\kappa^{-1}:U_\kappa\to C^1(\kappa^*D)$ is given by
\[
\phi_\kappa^{-1}(f)=(\kappa^*\pi_N,\kappa^*\exp)^{-1}(Id,f).
\]
The differentiable structure obtained by the atlas 
\[
\{(\phi_\kappa,U_\kappa)~:~\kappa\in C^1(M,N)   \}
\] 
is independent of the choice of connection on $N$. 
For more detailed constructions see \cite{Eli67,Pal68,Mic80} and for alternative approaches see also \cite{PT01}. 

Since $D\subset TN$ is open, $C^1(\kappa^*D)\subset C^1(\kappa^*TN)$ is open. Since $C^1(M,N)$ is locally identified with $C^1(\kappa^*D)$, it follows that the tangent space $T_\kappa C^1(M,N)$ is  isomorphic to the Banachable space of vector fields along $\kappa$,
\[
C^1(\kappa^*TN)\simeq\{ v\in C^1(M,TN)~:~\pi_N\circ v=\kappa\}.
\]
The Banach space structure for $C^1(\kappa^*TN)$ may be constructed as follows: Let $||\cdot||:J^1(\kappa^*TN)\to\R$ be a Finsler structure on $J^1(\kappa^*TN)$, that is, for every $p\in M$, $||\cdot||_p:J^1(\kappa^*TN)_p\to\R$ is a norm and $||\cdot||$ varies smoothly between the fibers of $J^1(\kappa^*TN)$. Since $M$ is compact, a Finsler structure exists, and moreover, any two Finsler structures on $J^1(\ks)$ are equivalent. We define a complete norm on $C^1(\ks)$ by 
\[
||w||_{C^1}=\sup_{p\in M}||j^1_pw||.
\]
One may verify that the topology induced by the norm $||\cdot||_{C^1}$ on $C^1(\ks)$ coincides with its Whitney $C^1$ topology.  Thus, the canonical chart $\phi_\kappa$ is indeed a homeomorphisms onto its image.

The tangent bundle $TC^1(M,N)$ may be identified with the bundle 
\[
C^1(\pi_N):C^1(M,TN)\to C^1(M,N),
\]
where
\[
C^1(\pi_N)(w)=\pi_N\circ w.
\]
Moreover, for every $\kappa\in C^1(M,N)$  the mapping 
\[
\Phi_\kappa: C^1(\kappa^*D)\times C^1(\kappa^*TN)\to C^1(M,TN),
\]
given by 
\beq
\label{canonicaltrivialisation}
\Phi_\kappa(u,w)=T(\exp_{\pi_N(u)})_u(w),
\eeq
is a trivialisation for $C^1(M,TN)$ along the canonical chart $\phi_\kappa$ corresponding to the trivialisation $T(\phi_\kappa)$ for $TC^1(M,N)$ under the bundle equivalence. For details see Eliasson \cite{Eli67}. Note that for $(u,w)\in \kappa^*D\times_M \kappa^*TN$ 
\beq
\label{jacobitrivialisation}
T(\exp_{\pi_N(u)})_u(w)=J_{u,w}(1)
\eeq
where $J_{u,w}:[0,1]\to TN$ is the  unique Jacobi field along along the geodesic $\exp(tu)$ satisfying $J_{u,w}(0)=0$ and $\frac{DJ_{u,w}}{dt}(0)=w$. 

\subsection{Connection and metric for $C^1(M,N)$}

Following Eliasson \cite{Eli67}, we construct a connection for $TC^1(M,N)$. Let $\pi:E\to M$ be a (possibly infinite dimensional) fiber bundle over a smooth manifold $M$ and let $VE \subset TE\to E$ be the vertical bundle defined by $VE=\ker(d\pi)$. An Ehresmann connection is a splitting $\tilde{K}:TE\to VE$ of the short exact sequence
\[
\begin{xy}
(0,0)*+{0}="zero1";
(18,0)*+{VE}="VE";
(36,0)*+{TE}="TE";
(54,0)*+{\pi^*TM}="TM";
(72,0)*+{0}="zero2";
{\ar@{->}"zero1";"VE"};
{\ar@{->}"TM";"zero2"};
{\ar@{->}^{\iota_V}"VE";"TE"};
{\ar@{->}^{d\pi}"TE";"TM"};
{\ar@{-->}@/^{1pc}/^{\mathcal{K}}"TE";"VE"};
\end{xy}
\]
satisfying  $\tilde{K}\circ \iota_V=\id_{VE}$ where $\iota_V:VE\to TE$ is the inclusion. $\mathcal{K}$ is often referred to as the \emph{connection form} of the Ehresmann connection. The horizontal bundle $HE$ is then identified with $\ker(\tilde{K})$. In case that $\pi:E\to M$ is a vector bundle we have a canonical identification $VE\simeq \pi^*E$. Thus, $\tilde{K}$ induces a unique mapping $K:TE\to E$ which we call a \emph{connection map} for $E$.
\[
\begin{xy}
(0,0)*+{E}="E1";
(20,0)*+{M}="M";
(20,20)*+{E}="E2";
(0,20)*+{\pi^*E}="piE";
(-20,30)*+{TE}="TE";
{\ar@{->}_{\pi}"E1";"M"};
{\ar@{->}^{\pi}"E2";"M"};
{\ar@{->}_{\pi^*(\pi)}"piE";"E1"};
{\ar@{->}^{\tilde{\pi}}"piE";"E2"};
{\ar@{->}^{\tilde{K}}"TE";"piE"};
{\ar@{->}@/_{1pc}/_{\pi_E}"TE";"E1"};
{\ar@{->}@/^{1pc}/^{K}"TE";"E2"};
\end{xy}
\]
A linear connection should also satisfy, the following condition: for every $\lambda\in \R$ denote by $S_\lambda:E\to E$  scalar multiplication by $\lambda$, then for every $e\in E$,
\[
K\circ(dS_\lambda)_e=\lambda K_e .
\]
 Suppose that $M$ and $E$ are modelled over the Banach spaces $\widetilde{M}$ and $\hat{E}$ respectively. Then $K$  has the local form
 \[
\tilde{K}:\widetilde{M}\times \hat{E}\times\widetilde{M}\times\hat{E}\to \widetilde{M}\times\hat{E}
\]
\[
\tilde{K}(x,\xi,\tilde{x},\tilde{\xi})=(x,S(x,\xi)(\tilde{x},\tilde{\xi})),
\]
where $S(x,v)(\tilde{x},\tilde{\xi})$ is linear in $\tilde{x}$ and $\tilde{\xi}$. The condition $\mathcal{K}\circ \iota_V=\id$ implies that $S$ is of the form $S(x,v)(\tilde{x},\tilde{\xi})=\tilde{\xi}-\Gamma(x,\xi)(\tilde{x})$ and the linearity condition implies that $\Gamma$ is linear in $\xi$. Thus, a linear connection map has the local form 
\[
\tilde{K}(x,\xi,\tilde{x},\tilde{\xi})=(x,\tilde{\xi}-\Gamma(x)(\xi,\tilde{x})),
\]
where $\Gamma(x):\widetilde{M}\times \hat{E}\to \hat{E}$ is a bilinear transformation called the \emph{local connector} of $K$ at $x$. 

In the particular case where $M$ is finite-dimensional and $E=TM$, the local connector $\Gamma$ is given by the Christoffel symbols, 
\[
\Gamma(x)(v^ie_i,w^je_j)=\Gamma^k_{ij}(x)v^iw^j \, e_k.
\] 

Given a connection map $K$ for $E$, one can define a covariant derivative $\nabla$ on $E$ in the following way:
For a section $\xi\in\Gamma(E)$, set its covariant derivative as $\nabla\xi=K\circ T\xi\in\Gamma(\Hom(TM,E))$. That is, for $p\in M$ and $w\in T_pM$ 
\[
(\nabla_w\xi)_p=K(T\xi(w))\in E_p.
\]
If a section $\xi$ is represented by $\tilde{\xi}:\tilde{M}\to \hat{E}$, that is, locally $\xi(\cdot)=(\cdot,\tilde{\xi}(\cdot))$,  and $w\in T_pM$ has a local representation $(x,\tilde{w})$, then a simple computation gives that the coordinate representation of $(\nabla_w\xi)_p$ is
\[
(x,D\tilde{\xi}(x)(\tilde{w})+\Gamma(x)(\tilde{w},\tilde{\xi}(x))),
\]
where $x$ is the coordinate corresponding to $p$.
\\
 
Turning back to the problem at hand, let $E=TN$ and let $K^N:T^2N\to TN$ be the connection map corresponding to the Levi Civita connection $\nabla^N$ on $TN$. One can then show (see \cite{Eli67} for details) that $K^N$ induces a connection map  
\[
C^1(K^N):T^2C^1(M,N)\simeq C^1(M,T^2N)\to  TC^1(M,N)\simeq C^1(M,TN)
\]
defined by composition,
\beq\label{connectionmapformula}
C^1(K^N)(A)=K^N\circ A,\quad A\in C^1(M,T^2N).
\eeq
Denote the  corresponding connection by $\nabla^{C^1}$.
By definition, for $\xi\in\Gamma(TC^1(M,N))$, $\kappa\in C^1(M,N)$ and $w\in T_\kappa C^1(M,N)$,
\beq
\label{conformula}
(\nabla^{C^1}_w\xi)_\kappa=(C^1(K^N)\circ (T\xi)_\kappa)(w)=K^N \circ\left((T\xi)_\kappa(w) \right).
\eeq
Note that on the right-hand side, $(T\xi)_\kappa(w) :M\to T^2N$ and $K^N:T^2N\to TN$, hence, we obtain indeed a map $M\to TN$, i.e., an element of $TC^1(M,N)$.
The exponential map for $C^1(M,N)$ with respect to $\nabla^{C^1}$ is given by composition with $\exp$, thus, the canonical coordinate charts $\phi_\kappa$ are normal coordinates in the following sense: 
for every $\xi\in T_\kappa C^1(M,N)$, $t\mapsto \phi_\kappa(t\xi)$ is a $\nabla^{C^1}$-geodesic \cite{Eli67}. In particular, the local connector of $\nabla^{C^1}$ in the canonical coordinate chart $\phi_\kappa$ vanishes at the zero section $\bar{0}\in C^1(\kappa^*TN)$ (corresponding to $\kappa$). 

We next turn to construct a Riemannian metric for $C^1(M,N)$. Assume that a \emph{mass form}, which is a positive $d$-form $\theta$ on $M$  is given. Using the isomorphism 
\[
TC^1(M,N)\simeq C^1(M,TN),
\]
define a metric $\GG$ for $C^1(M,N)$ by 
\beq
\label{metricdefinition}
\GG_\kappa(u,w)=\int_M \kappa^*G(u,w)\theta,\quad u,w\in T_\kappa C^1(M,N) \simeq C^1(\kappa^*TN).
\eeq
The mass density of $M$ is incorporated in the mass form $\theta$. Locally,
\[
\theta=\rho \, dx^1\wedge\cdots\wedge dx^d,
\]
where $\rho:M\to \R_+$ is a mass density function. In cases where $M$ is endowed with a Reimannian metric $g$, it is often natural to take for mass form the Riemannian volume form $\theta=\Vol_g$, corresponding to the mass destiny $\rho=\sqrt{\det(g_{ij})}$ . 

\begin{remark}
As always, the metric $\GG$ induces an isometric immersion $\flat^{\GG}:TC^1(M,N)\to T^*C^1(M,N)$ given by 
\[
\flat^\GG(w)=\GG(w,\cdot),\quad \forall w\in TC^(M,N).
\]
However, since the manifold is not a Hilbert manifold, $\flat^\GG$ is not an isomorphism. For this reason, $\GG$ is often called a \emph{weak Riemannian structure} (as opposed to a strong Riemannian structure). 
\end{remark}

\subsection{Metricity of the connection}

We next show that the connection $\nabla^{C^1}$ and the metric $\GG$ for $C^1(M,N)$ are compatible, namely, for $u,v,w\in \Gamma(TC^1(M,N))$,
\[
u (\GG(v,w)) = \GG(\nabla^{C^1}_uv,w) + \GG(v,\nabla^{C^1}_uw).
\]
The metricity of the connection will be used in several instances in the mechanical context.

\begin{lemma}
\label{jacobilemma1}
Let $y\in N$, $v,w\in T_yN$ and $t>0$ sufficiently small. Then, for every $s\in[0,1]$,
\[
J_{tv,w}(s)=J_{v,w/t}(ts),
\]
where $J_{v,w}$ is the Jacobi field as defined in \secref{subsec:manifold}.
\end{lemma}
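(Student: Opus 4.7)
The cleanest approach is to invoke the standard variation-of-geodesics characterization of Jacobi fields: for $u,w\in T_yN$ with $u$ small,
\[
J_{u,w}(s) \;=\; \left.\frac{\partial}{\partial\varepsilon}\right|_{\varepsilon=0} \exp_y\bigl(s(u+\varepsilon w)\bigr).
\]
I would first record this formula (a brief verification: both sides satisfy the Jacobi equation along $r\mapsto\exp_y(ru)$ with initial value $0$ and initial covariant derivative $w$, so uniqueness identifies them). Given this, the claimed identity becomes a one-line algebraic rearrangement inside the exponential:
\[
J_{tv,w}(s) = \left.\frac{\partial}{\partial\varepsilon}\right|_{\varepsilon=0} \exp_y\bigl(s(tv+\varepsilon w)\bigr) = \left.\frac{\partial}{\partial\varepsilon}\right|_{\varepsilon=0} \exp_y\bigl(ts(v+\varepsilon w/t)\bigr) = J_{v,w/t}(ts),
\]
using $tv+\varepsilon w = t(v+\varepsilon w/t)$.

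If one prefers to avoid quoting the variation formula, the same statement follows by an affine reparametrization argument, which I would carry out as the main alternative. Define $\tilde J(s):=J_{v,w/t}(ts)$. Since the geodesic underlying $J_{v,w/t}$ is $r\mapsto\exp_y(rv)$, the field $\tilde J$ sits along $s\mapsto\exp_y(tsv)=\exp_y(s\cdot tv)$, which is precisely the geodesic carrying $J_{tv,w}$. Under the reparametrization $r=ts$, the second covariant derivative $D^2/ds^2$ picks up a factor $t^2$, and the curvature term $R(\cdot,\dot\gamma)\dot\gamma$ scales by $t^2$ as well (because the velocity scales by $t$), so $\tilde J$ solves the Jacobi equation along the new geodesic. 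The initial conditions match those of $J_{tv,w}$, namely $\tilde J(0)=0$ and $D\tilde J/ds(0)=t\cdot(w/t)=w$, and uniqueness for the Jacobi initial-value problem closes the argument.

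The only mild subtlety is to ensure that all Jacobi fields in sight are defined on the required parameter intervals. Because $t$ is taken sufficiently small, $tv$ lies in the domain $D\subset TN$ where $\exp$ is a diffeomorphism onto its image, and the reparametrized geodesic $s\mapsto\exp_y(sv)$ exists for $s\in[0,t]$, which is exactly the range of $ts$ needed. No genuine obstacle arises; the content of the lemma is really just the homogeneity of the Jacobi equation under affine rescaling of the parameter and the linearity in the initial velocity.
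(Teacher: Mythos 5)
Your proposal is correct, and in fact contains two complete arguments. The second one (the affine reparametrization: set $\tilde J(s)=J_{v,w/t}(ts)$, check that the $t^2$ scaling of $D^2/ds^2$ matches the $t^2$ scaling of the curvature term $R(\cdot,\dot\gamma_t)\dot\gamma_t$ coming from $\dot\gamma_t(s)=t\dot\gamma_1(ts)$, match the initial conditions $\tilde J(0)=0$ and $D\tilde J/ds(0)=t\cdot(w/t)=w$, and invoke uniqueness for the Jacobi initial-value problem) is essentially word-for-word the paper's proof. Your first argument, via the variation formula $J_{u,w}(s)=\left.\tfrac{\partial}{\partial\varepsilon}\right|_{\varepsilon=0}\exp_y(s(u+\varepsilon w))$ followed by the identity $s(tv+\varepsilon w)=ts(v+\varepsilon w/t)$, is a genuinely different and slicker route: it trades the explicit ODE computation for a single quoted fact about Jacobi fields as variations of geodesics, after which the lemma is pure algebra inside the exponential. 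The paper's route has the advantage of being self-contained at the level of the Jacobi equation itself (which is the form in which the lemma is subsequently used in the metricity proof), while yours makes the underlying homogeneity more transparent. Your remark about domains of definition for small $t$ is a reasonable precaution that the paper leaves implicit.
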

\begin{proof}
Define $J_1,J_2:I\to TN$ by $J_1(s)=J_{tv,w}(s)$ and $J_2(s)= J_{v,w/t}(ts)$. 
We need to prove that $J_1=J_2$.
$J_1$ is a Jacobi field along the geodesic $\gamma_t(s)=\exp(stv)$.  Since $\dot{\gamma}_t(s)=t\dot{\gamma}_1(ts)$ and $J_{v,w}$ satisfies the Jacobi equation, we get that
\[
\begin{split}
\left.\frac{D^2}{ds^2}J_2\right|_s&=t^2\left.\frac{D^2}{ds^2}J_{v,w/t}\right|_{ts}=-t^2R(J_{v,w/t}(ts),\dot{\gamma}_1(ts)),\dot{\gamma_1}(ts)=\\
&=-R(J_2(s),\dot{\gamma}_t(s)), \dot{\gamma}_t(s).
\end{split}
\]
In other words, $J_2$ is also a Jacobi field along $\gamma_t$. Moerover,  $J_1(0)=J_2(0)=0$ and  
\[
\frac{D}{ds} J_2(0)=t\,\frac{D}{ds} J_{v,w/t}(0)= t\cdot \frac{w}{t}=w=
\frac{D}{ds} J_1(0).
\]
The result  follows from the existence and uniqueness of solutions to ordinary differential equations.
\end{proof}

The following lemma is a standard result in the theory of Jacobi fields (see e.g. [DOC92]).
\begin{lemma}\label{jacobilemma2}
Let $y\in N$, and $(y^i)$ be normal coordinates centered at $y$, and let $\gamma(t)=\exp(tv)$ ($v\in T_yN$) be a radial geodesic emanating from $y$. Then for any $w\in T_yN$ given locally by $w=w^i\partial_{y_i}$, the Jacobi field along $\gamma$ with initial conditions $J(0)=0$, $(D/dt)J(0)=w$ is given locally by 
\[
J(t)=tw^i(\gamma^*\partial_{y_i}).
\]
\end{lemma}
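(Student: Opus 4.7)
The plan is to realize the desired Jacobi field as the variation field of a one-parameter family of radial geodesics through $y$, and then compute that variation field explicitly using the defining property of normal coordinates.

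First, I would consider the smooth two-parameter map
\[
\alpha(s,t) = \exp\bigl(t(v+sw)\bigr),
\]
defined for $(s,t)$ in a neighborhood of $\{0\}\times[0,1]$ on which the right-hand side is well defined. For each fixed $s$, the curve $t\mapsto\alpha(s,t)$ is a geodesic, so the variation field $\tilde J(t) \defi \partial_s\alpha(0,t)$ is a Jacobi field along $\gamma(t) = \alpha(0,t)$. Its initial conditions are $\tilde J(0) = \partial_s\alpha(0,0) = 0$, and, using that the Levi--Civita connection is torsion-free (so that $\tfrac{D}{dt}\partial_s\alpha = \tfrac{D}{ds}\partial_t\alpha$),
\[
\frac{D\tilde J}{dt}(0) = \left.\frac{D}{ds}\right|_{s=0}\partial_t\alpha(s,0) = \left.\frac{D}{ds}\right|_{s=0}(v+sw) = w.
\]

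The main step is to express $\tilde J$ in the chart $(y^i)$. The defining property of normal coordinates centered at $y$ is that, under the identification of $T_yN$ with $\R^m$ via the basis $\{\partial_{y_i}|_y\}$, the exponential map coincides with the identity on a neighborhood of the origin. Writing $v = v^i\partial_{y_i}|_y$ and $w = w^i\partial_{y_i}|_y$, the coordinate representation of $\alpha(s,t)$ is therefore $\bigl(t(v^i+sw^i)\bigr)_i$. Differentiating at $s=0$ yields the tangent vector with coordinates $(tw^i)_i$ based at the point $\gamma(t)$, that is,
\[
\tilde J(t) = tw^i\,\partial_{y_i}\big|_{\gamma(t)} = tw^i(\gamma^*\partial_{y_i})(t).
\]

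Finally, since the Jacobi equation is a linear second-order ODE along $\gamma$, a Jacobi field is determined by its value and covariant derivative at $t=0$. Both $J$ and $\tilde J$ satisfy the equation with the same initial data $(0,w)$, so they coincide, which gives the claimed formula. I do not anticipate any serious obstacle; the one point requiring care is keeping track of where each tangent vector lives -- in particular, that $\partial_s\alpha(0,t)$ is a vector in $T_{\gamma(t)}N$ rather than in $T_yN$, which is precisely what the pullback notation $\gamma^*\partial_{y_i}$ is designed to record.
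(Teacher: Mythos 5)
Your proof is correct: the variation $\alpha(s,t)=\exp(t(v+sw))$, the symmetry lemma to get the initial data $(0,w)$, the fact that $\exp$ is the identity in normal coordinates, and uniqueness for the Jacobi equation together give exactly the claimed formula. The paper itself gives no proof of this lemma---it cites it as a standard result from do Carmo---and your argument is precisely the standard one found there, so there is nothing to reconcile.
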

\begin{theorem}
\label{thm:metricity}
The connection $\nabla^{C^1}$ is metric with respect to $\GG$. In other words, for every $u,v,w\in \Gamma(TC^1(M,N))$ 
\[
(\nabla^{C^1}_u\GG)(v,w) := u(\GG(v,w))-\GG(\nabla^\Q_uv,w)-\GG(v,\nabla^\Q_uw)=0.
\]
\end{theorem}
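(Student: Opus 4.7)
The plan is to reduce the metricity identity to the pointwise metricity of $\nabla^N$ on $N$, exploiting the fact that both $\GG$ and $\nabla^{C^1}$ are built fiberwise in $p\in M$ from their counterparts on $N$. Fix $\kappa\in C^1(M,N)$ and $u,v,w\in T_\kappa C^1(M,N)\simeq C^1(\kappa^*TN)$, and choose a smooth curve $\alpha:(-\e,\e)\to C^1(M,N)$ with $\alpha(0)=\kappa$ and $\dot\alpha(0)=u$ (for instance $\alpha(t)=\phi_\kappa(tu)$), together with smooth vector fields $V(t),W(t)$ along $\alpha$ extending $v,w$. Since the right-hand side of the metricity identity depends only on these data, it is enough to establish
\[
\tfrac{d}{dt}\Big|_{t=0}\GG_{\alpha(t)}\bigl(V(t),W(t)\bigr) = \GG_\kappa\bigl(\nabla^{C^1}_u V,\,W\bigr)+\GG_\kappa\bigl(V,\,\nabla^{C^1}_u W\bigr).
\]

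By the definition \eqref{metricdefinition} and the compactness of $M$, the left-hand side can be computed by differentiating under the integral sign. For each $p\in M$, the curve $\gamma_p(t):=\alpha(t)(p)$ lies in $N$ with $\dot\gamma_p(0)=u_p$, and $t\mapsto V(t)_p$, $t\mapsto W(t)_p$ are smooth vector fields along $\gamma_p$. The metricity of $\nabla^N$ with respect to $G$ therefore gives, at $t=0$,
\[
\tfrac{d}{dt}\Big|_{0} G_{\gamma_p(t)}\!\bigl(V(t)_p,W(t)_p\bigr)= G\!\bigl(\tfrac{D^N}{dt}V(t)_p\big|_{0},\,W(0)_p\bigr) + G\!\bigl(V(0)_p,\,\tfrac{D^N}{dt}W(t)_p\big|_{0}\bigr),
\]
where $D^N/dt$ denotes the $\nabla^N$-covariant derivative along $\gamma_p$.

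The decisive step is the pointwise identification $\tfrac{D^N}{dt}V(t)_p\big|_{0}=(\nabla^{C^1}_u V)_\kappa(p)$, and similarly for $W$. This follows directly from \eqref{conformula} together with the defining property of the connection map $K^N$: the tangent $(TV)_\kappa(u)\in T_v C^1(M,TN)\simeq C^1(M,T^2N)$ evaluates at $p$ to $\tfrac{d}{dt}|_0 V(t)_p\in T_{v_p}TN$, and $K^N$ applied to that tangent vector is, by the very definition of $K^N$, the $\nabla^N$-covariant derivative of $t\mapsto V(t)_p$ along $\gamma_p$ at $t=0$. Substituting this identification back into the integrand and re-invoking \eqref{metricdefinition} produces $\GG_\kappa(\nabla^{C^1}_u V,W)+\GG_\kappa(V,\nabla^{C^1}_u W)$, which is the desired metricity.

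The main point requiring care is the pointwise identification in the third step, namely the canonical isomorphism $T(C^1(M,TN))\simeq C^1(M,T^2N)$ under which $(TV)_\kappa(u)$ corresponds to $p\mapsto \tfrac{d}{dt}|_0 V(t)_p$; this identification is precisely what gives content to the formula $C^1(K^N)(A)=K^N\circ A$ of \eqref{connectionmapformula}. Once it is in hand, interchanging $d/dt$ with $\int_M$ is routine by compactness of $M$ and joint smoothness of the integrand, and the Jacobi-field formulae for the canonical trivialisation play no role in the argument.
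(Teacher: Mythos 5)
Your proof is correct, but it takes a genuinely different route from the paper's. The paper works in the canonical chart $\phi_\kappa$, uses the fact that it is a normal chart (so the local connector of $\nabla^{C^1}$ vanishes at $\kappa$), writes the local representative of $\GG$ via the trivialisation \eqref{canonicaltrivialisation} in terms of the Jacobi fields $J_{\xi_0,\xi_i}(1)$, and then kills the derivative of that representative at the zero section using the rescaling identity of Lemma~\ref{jacobilemma1} and the normal-coordinate formula of Lemma~\ref{jacobilemma2}. You instead bypass the chart and the Jacobi-field machinery entirely: you differentiate $\GG_{\alpha(t)}(v_{\alpha(t)},w_{\alpha(t)})$ under the integral sign and reduce everything to the pointwise metricity of $\nabla^N$, the crux being the identification $\bigl(\nabla^{C^1}_u v\bigr)_\kappa(p)=\tfrac{D^N}{dt}\big|_0\, v_{\alpha(t)}(p)$, which is exactly the standard characterisation of covariant differentiation along curves via the connection map $K^N$ combined with the definition \eqref{conformula}. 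Your identification is valid, and the interchange of $d/dt$ with $\int_M$ is justified to the same standard as in the paper (differentiability into $C^0(M)$ plus boundedness of integration against $\theta$ on the compact $M$). Your argument is arguably more transparent: it makes explicit that the only input is metricity of $\nabla^N$ (a point the paper only extracts in the remark following its proof) and that neither the Levi-Civita property nor the mass form $\theta$ plays any role. What the paper's route buys is a verification carried out entirely in the local model $C^1(\kappa^*TN)$, reusing the normal-chart and Jacobi-field formulae that it has already set up for the manifold structure; the cost is the extra limit computation \eqref{eq:1}, which your argument avoids. One small point of care: since $u,v,w$ are given as global sections, you should take $V(t)=v_{\alpha(t)}$ and $W(t)=w_{\alpha(t)}$ (or note, as you do, that both sides depend only on the restrictions along $\alpha$) so that the left-hand side really is $u(\GG(v,w))$ at $\kappa$.
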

\begin{proof}
Let $\kappa\in C^1(M,N)$. It suffices to show that  $\nabla^{C^1}\GG$ vanishes at some coordinate chart at $\kappa$. Let
\[
\phi_\kappa: C^1(\kappa^*D)\to C^1(M,N).
\]
be the canonical chart around $\kappa$ and let
\[
\Phi_\kappa: C^1(\kappa^*D)\times C^1(\kappa^*TN)\to C^1(M,TN)\simeq TC^1(M,N)
\]
be the corresponding trivialization of $TC^1(M,N)$ along $\phi_\kappa$ given by \eqref{canonicalchart} and \eqref{canonicaltrivialisation}. Since $\phi_\kappa$ is a normal coordinate chart, the Christoffel symbols (i.e., the local connector) of $\nabla^\Q$ vanish at $\kappa$. Therefore, it suffices to prove that the derivative of the local representative of $\GG$ vanishes at the zero section $\bar{0}\in C^1(\kappa^*TN)$ (corresponding to $\kappa$). 

The local representative of $\GG$,
\[
\tilde{\GG}:C^1(\kappa^*TN)^*\otimes C^1(\kappa^*TN)^*,
\]
is given by 
\[
\tilde{\GG}(\xi_0)(\xi_1,\xi_2)=\GG_{\phi_\kappa(\xi_0)}(\Phi_\kappa(\xi_0,\xi_1),\Phi_\kappa(\xi_0,\xi_2)),
\]
where $\xi_0\in C^1(\kappa^*D)$ and $\xi_1,\xi_2\in C^1(\kappa^*TN)$. 
More explicitly, using \eqref{jacobitrivialisation},
\[
\begin{split}
\tilde{\GG}(\xi_0)(\xi_1,\xi_2)&=\int_M G_{\exp(\xi_0)}(T(\exp_{\pi_N(\xi_0)})_{\xi_0}(\xi_1),T(\exp_{\pi_N(\xi_0)})_{\xi_0}(\xi_2))\theta \\
&=\int_M G_{\exp(\xi_0)}(J_{\xi_0,\xi_1}(1),J_{\xi_0,\xi_2}(1))\theta.
\end{split}
\]
Note that the vector field $J_{\xi_0,\xi_1}(1)$ evaluated at $p\in M$ is given by $J_{\xi_0(p),\xi_1(p)}(1)$.

Now, Let $\xi_0=\bar{0}\in C^1(\kappa^*D)$ (so that $\phi_\kappa(\xi_0)=\kappa$), $\eta\in C^1(\kappa^*TN)$ and $\xi_1,\xi_2$ as before. Then 
\[
\begin{split}
D&\tilde{\GG}_{\bar{0}}(\eta)(\xi_1,\xi_2)=\left.\frac{d}{dt}\right|_{t=0}\bar{\GG}(t\eta)(\xi_1,\xi_2)\\
&=\left.\frac{d}{dt}\right|_{t=0}\int_M G_{\exp(t\eta)}(J_{t\eta,\xi_1}(1),J_{t\eta,\xi_2})\theta\\
&=\int_M \left.\frac{d}{dt}\right|_{t=0}G_{\exp(t\eta)}(J_{t\eta,\xi_1}(1),J_{t\eta,\xi_2})\theta\\
&=\int_M \left.\frac{d}{dt}\right|_{t=0}G_{\exp(t\eta)}(J_{\eta,\xi_1/t}(t),J_{\eta,\xi_2/t}(t))\theta,
\end{split}
\]
where in the passage to the third line we interchange integration over $M$ and differentiation with respect to time, and the last equality follows from lemma \ref{jacobilemma1}. 

It suffices to show the the integrand vanishes at every $p\in M$. Let $p\in M$ and $v,u,w\in T_{\kappa(p)}N$  we need to prove that 
\beq\label{eq:1}
\lim_{t\to 0}\frac{d}{dt}G_{\exp(tv)}(J_{v,u/t}(t),J_{v,w/t}(t))=0.
\eeq
Since $\nabla^N$ is metric with respect to $\G$,
\[
\begin{split}
\frac{d}{dt}G_{\exp(tv)}&(J_{v,u/t}(t),J_{v,w/t}(t))=G_{\exp(tv)}\brk{\frac{D}{dt}J_{v,u/t}(t),J_{v,w/t}(t)} \\
&+G_{\exp(tv)}\brk{J_{v,u/t}(t),\frac{D}{dt}J_{v,w/t}(t)}.
\end{split}
\]
Let $(y^i)$ be normal coordinates centred at $\kappa(p)$, and let $\gamma(s)=\exp(sv)$. Then by lemma \ref{jacobilemma2}, $J_{v,w/t}$ is given locally by $J_{v,w/t}(s)=sw^i/t(\gamma^*\partial_{y^i})$ hence, $J_{v,w/t}(t)=w^i\partial_{y^i}|_{\gamma(t)}$ is a constant vector field along $\gamma$ and 
\[
\lim_{t\to 0}\frac{D}{dt}J_{v,w/t}(t))=0,
\]
which completes the proof.

\end{proof}
\begin{remark}
The proof of theorem \ref{thm:metricity} shows in fact, that $\nabla^{C^1}$ is metric with respect to $\GG$  whenever $\nabla^N$ is metric with respect to $G$. Note that metricity does not depend on the choice of mass form $\theta$ for $\B$. 
\end{remark}

\section{Elastodynamics}
\label{sec:Elastodynamics}

In this section we give a brief review of the geometric setting of elastodynamics. 
The exposition, which builds upon the geometric construction in \secref{sec:prelims}, follows the lines of \cite{KOS17a}.

\subsection{The manifold of configurations}

\begin{definition}
A \emph{body manifold} $\B$ is a smooth compact and orientable $d$-dimensional manifold. A \emph{space manifold} $\S$ is a smooth orientable $m$-dimensional manifold without boundary. 
\end{definition}

We assume  that $\S$ is equipped with a Riemannian metric $G$ and that $\B$ is equipped with a mass form $\theta\in\Omega^d(\B)$. The canonical charts for $C^1(\B,\S)$ are constructed as in \secref{sec:prelims} using the exponential map induced by the Levi-Civita connection $\nabla^\S$ of $\S$.

\begin{definition}
Denote by 
\[
\Q = \Emb^1(\B,\S)
\]½
 the space of $C^1$-embeddings of $\B$ in $\S$. Let $I\subset\R$ be a closed time interval. The \emph{configuration space},
\[
\QQ = C^1(I,\Q),
\]
is the space of $C^1$-paths of embeddings of $\B$ in $\S$.
\end{definition}

Since $\Q$ is an open subset of $C^1(\B,\S)$ with respect to the Whitney $C^1$-topology (see  \cite{Mic80}), it inherits the Banach manifold structure of $C^1(\B,\S)$. Moreover, as (see \cite{Eli67})
\[
C^1(I\times\B,\S)\simeq C^1(I,C^1(\B,\S)),
\]
we may view $\QQ$ as an open subset of $C^1(I\times\B,\S)$. $\QQ$ therefore inherits the Banach manifold structure of $C^1(I\times \B,\S)$.
 
Note that there is a natural inclusion $\iota_\Q:\Q\hookrightarrow\QQ$, given by
\beq
(\iota_\Q\kappa)(t,p) = \kappa(p).
\label{eq:iotaQ}
\eeq
We refer to $\Q$ as the space of stationary configurations.  
 
 
The tangent bundle $T\QQ$ is called the bundle of \emph{virtual displacements}, or \emph{generalised velocities}. For $\kappa\in\QQ$, an element $v\in T_\kappa\QQ$ is called a \emph{virtual displacement} at $\kappa$. As in the general case, we have the isomorphisms,
\[
T_\kappa\QQ\simeq C^1(\ks)\simeq \{v\in C^1(I\times\B,T\S)~:~\pi_\S\circ v=\kappa\},
\]
and 
\[
T\QQ\simeq \{v\in C^1(I\times\B,T\S)~:~\pi_\S\circ v\in \QQ\}\subset C^1(I\times\B,T\S).
\]
where the above inclusion is open; in other words, we view $T\QQ$ as an open submanifold of $C^1(I\times\B,T\S)$.


Denote the restriction of the connection map $C^1(K^\S)$ (see \eqref{connectionmapformula}) to $\QQ$  by $K^\QQ$, 
that is,
\[
K^\QQ : T^2\QQ \subset C^1(I\times \B, T^2\S) \to T\QQ \subset C^1(I\times\B,T\S).
\]
Denote the corresponding connection by $\nabla^\QQ$, namely,
\[
\nabla^\QQ : \Gamma(T\QQ)\times \Gamma(T\QQ) \to \Gamma(T\QQ).
\] 
The metric $\GG$ for $\QQ$ is given by 
\beq
\label{metricforQ}
\GG_\kappa(v,w)=\int_{I\times B}\kappa^*G(v,w)\,\theta\wedge dt,\quad v,w\in T_\kappa\QQ\simeq C^1(\ks).
\eeq
By Theorem~\ref{thm:metricity}, $\nabla^\QQ$ is metrically consistent with $\GG$.


Throughout this paper, points in $I\times\B$ and $\S$ are denoted by $(t,x)$ and $y$ respectively. The indices of coordinates in $I\times\B$ will be denoted by Greek letters, whereas indices of coordinates in $\S$ will be denoted by Roman letters. A point $(t,x)\in I\times \B$ is represented by $(x^\alpha)_{\alpha=0}^d=(t,x^1,\ldots,x^d)$.

\subsection{Forces and stresses}

\begin{definition}
Let $\kappa\in \QQ$. A  \emph{force at $\kappa$} is an element $f\in T_\kappa^*\QQ\simeq (C^1(\ks))^*$. The action $f(w)$ of a force $f\in T^*_\kappa\QQ$ on a virtual displacement  $w\in T_\kappa\QQ$ is called a \emph{virtual power}. 
\end{definition}

For simplicity, we will focus our attention on forces that are independent of time derivatives; that is, forces $f\in T_\kappa^*\QQ$ of the form 
\beq
\label{timeindforce}
f(w)=\int_I f_t(w_t)dt,\quad \forall w\in T_\kappa\QQ\simeq C^1(\ks),
\eeq
where$\{f_t\}_{t\in I}$ is a smooth family of elements $f_t\in T^*_{\kappa_t}\Q$, $\kappa_t = \kappa(t,\cdot)\in \Q$ and $w_t:=w_{(t,\cdot)}\in T_{\kappa_t}\Q\simeq C^1(\kappa_t^*T\S)$.  With a slight abuse of terminology, we refer to elements of $T^*\Q$ as forces as well.

We therefore turn to present the structure of $T^*\Q$, the space of forces over stationary configurations. 
First, note that unlike in finite dimensions, the tangent and cotangent bundles $T\Q$ and $T^*\Q$ are not isomorphic. In particular, given a stationary configuration $\vp\in\Q$,  the dual space $T^*_\vp\Q\simeq(C^1(\vp^*T\S))^*$ depends on the topology of $C^1(\vp^*T\S)$. Since the  topology of $C^1(\vp^*T\S)$ takes into account  first derivatives, so do the elements of $(C^1(\vp^*T\S))^*$. 
 
More formally, let $\vp\in\Q$, and consider the first jet extension
\[
j^1:C^1(\vp^*T\S)\to C^0(J^1(\vp^*T\S)),\qquad v\mapsto j^1v,
\]
 which is a continuous linear embedding. By the Hahn-Banach theorem, its dual map,
 \[
 (j^1)^*:(C^0(J^1(\vp^*T\S)))^*\to (C^1(\vp^*T\S))^*,
 \] 
is onto. We conclude that to every force $f$ at $\vp$ corresponds a (non-unique) $\sigma\in (C^0(J^1(\vp^*T\S)))^*$, satisfying
\beq
\label{stress-rep}
f(w)=(j^1)^*\sigma(w):=\sigma(j^1w),\quad \forall w\in C^1(\vp^*T\S).
\eeq
We call $\sigma$ a \emph{stress} at $\vp$. We say that a stress $\sigma$ at $\vp$ represents the force $f$ if the relation \eqref{stress-rep} holds. Note however, that for a given force $f$, there may be more than one stress representing it. This reflects the well-known static indeterminacy of continuum mechanics. 

In fact, stresses may  also be viewed as cotangent vectors of some other manifold; Let $\mathcal{E}=C^0(J^1(\B,\S))$ be the manifold of $C^0$-sections $\B\to J^1(\B,\S)$. Then for every $\vp\in\Q$ one has a canonical isomorphism 
\beq
(C^0(J^1(\vp^*T\S))^* \simeq T^*_{j^1\vp}\mathcal{E}.
\label{eq:iso5}
\eeq
For more details see \cite{KOS17a}. 

In general, stresses and forces, which are continuous linear functionals on differentiable sections, may be singular. Locally, and in particular, if $\B$ can be covered by a single chart, every stress $\sigma$ is represented by a collection of measures on $\B$, 
\[
\{\mu_i,\mu_i^\alpha~:~1\leq\alpha\leq d,\,1\leq i\leq m   \}
\]
by the formula 
\[
\sigma(j^1w)=\int_{\B}w^id\mu_i+\int_{\B}w^i_{,\alpha}d\mu_i^\alpha.
\]
If the measures  $\{\mu_i,\mu_i^\alpha\}$ are absolutely continuous, we may write 
\[
\mu_i=R_i \,\Vol
\Textand 
\mu_i^\alpha=S_i^\alpha\,\Vol ,
\]
where $R_i,S_i^\alpha\in C^1(\B)$ and $Vol:=dx^1\wedge\cdots\wedge dx^d$. This suggests the following definition (see \cite{Seg86}):

\begin{definition}
Let $\vp\in\Q$. A \emph{variational stress density} $S$ at $\vp$ is a smooth $d$-form valued in the vector bundle $(J^1(\vp^*T\S))^*$. In other words 
\[
S\in \Gamma(\Hom(J^1(\vp^*T\S),\Lambda^dT^*\B)).
\] 
 We say that a stress $\sigma$ at $\vp$ is \emph{smooth}, if there exists a variational stress density $S\in\Gamma(\Hom(J^1(\vp^*T\S),\Lambda^dT^*\B)$, such that 
\[
\sigma(j^1v)=\int_{\B}S(j^1v)
\]
for every $v\in C^1(\vp^*T\S)$.
\end{definition}
%

 
Let $S$ be a variational stress density at $\vp$.  As shown in \cite{Seg02,Seg13}, we may decompose $S$ into body and surface terms as follows,
\beq
\label{stressrep}
\int_{\B} S(j^1w)= - \int_{\B} \divergence S(w)+\int_{\partial\B} p_\sigma S(w).
\eeq
Here, $\divergence S$ and $p_\sigma S$ are vector-valued forms,
\[
\begin{gathered}
\divergence S \in \Gamma(\Hom(\vp^*T\S,\Lambda^{d}T^*\B)) \\
p_\sigma S \in \Gamma(\Hom(\vp^*T\S,\Lambda^{d-1}T^*\B) ),
\end{gathered}
\]

In coordinates, the action of a variational stress on the jet extension of a virtual velocity is of form 
\[
S(j^1w) = (R_i w^i + S_i^\alpha w^i_{,\alpha})\, \Vol, 
\]
where $R_i,S_i^\alpha\in C^1(\B)$. The vector-valued forms $\divergence S$ and $p_\sigma S$ are then given by
\[
\begin{split}
\divergence S(w) &= (\divergence S)_i w^i\, \Vol \\
p_\sigma S(w) &= (p_\sigma S)^\alpha_i w^i\,  \partial_\alpha\inc\Vol ,
\end{split}
\]
where
\beq
\label{eq:divS}
(\divergence S)_i = S^\alpha_{i,\alpha} - R_i
\Textand 
(p_\sigma S)_i^\alpha =  S^\alpha_i.
\eeq

Let $\vp\in\Q$. Suppose that a force $f\in T^*_\vp\Q$ is given by body and surface force densities $b\in \Gamma(\Hom(\vp^*T\S,\Lambda^{d}T^*\B))$ and $\mathcal{T}\in\Gamma(\Hom(\vp^*T\S|_{\partial\B},\Lambda^{d-1}T^*\partial\B))$. That is, for every $w\in C^1(\vp^*T\S)$
\beq
\label{smoothforce}
f(w)=\int_{\B} b(w) +\int_{\partial\B} \mathcal{T}(w),
\eeq
Then, it follows from \eqref{stressrep} that $f$ is represented by a smooth stress $\sigma$ at $\vp$ with variational stress density $S$,
\[
f(w) = \int_{\B} S(j^1 w),
\]
if and only if, for every virtual displacement $w\in C^1(\vp^*T\S)$,
\[
\int_{\B}\left( \divergence S(w) + b(w)\right)=0 
\]
and
\[
\int_{\partial\B}\left(p_\sigma S|_{\partial\B}(w)-\mathcal{T}(w)\right)=0.
\]
We conclude that $f$ is represented by a variational stress density $S$, if and only if 
\beq
\label{stressrep1}
\divergence S + b=0 
\Textand  
p_\sigma S|_{\partial\B}=\mathcal{T}.
\eeq

Equation \eqref{stressrep1} is only a representation theorem. In other words, for every fixed $\vp\in\Q$ and force $f\in T_\vp^*\Q$ of the form \eqref{smoothforce}, a smooth stress $\sigma$, given by a variational stress density $S$ at $\vp$, represents $f$ if and only if $S$ satisfies the boundary value problem \eqref{stressrep1}. 
 
Note also that equation \eqref{stressrep1} is underdetermined: in local charts it constitutes $d$ equations for the $(d\times m+m)$ components $(R_i,S_i^\alpha)$ of $S$. In order to obtain a well-posed system, one must specify the dependence of stress and force on the configuration $\vp$.

Back to the time-dependent context, of the force $f\in T_\kappa^*\QQ$ is of the form \eqref{timeindforce}, where $f_t\in T_{\kappa_t}^*\Q$, then there exists a family $\sigma_t$ of stresses at $\kappa_t$, such that
\[
f(w) = \int_I \sigma_t(j^1w_t)\, dt.
\]
If, furthermore, every $\sigma_t$ is smooth with a family of variational stress densities $S_t$, then
\[
f(w) = \int_I \brk{-\int_\B \divergence S_t(w_t) + \int_{\partial\B}p_\sigma S_t(w_t)}\, dt.
\]
The representation theorem states then that a force $f$, given by time-dependent body and surface force densities,
\[
f(w) = \int_I \brk{\int_\B b_t(w_t) + \int_{\partial\B} \calT_t(w_t)}\, dt,
\]
is represented by a family of smooth stresses with densities $S_t$, then
\[
\divergence S_t + b_t=0 
\Textand  
p_\sigma S_t|_{\partial\B}=\mathcal{T}_t.
\]

\subsection{Loadings and constitutive relations}

A mechanical system, whether finite- or infinite-dimensional, is specified by its configuration space, and by a force field, assigning a force to every configuration. It is customary in mechanics to partition the total force $F_T$ into external and internal components; in continuum mechanics external forces are due to \emph{loadings}, and internal forces result from a \emph{constitutive relation}.

In our setting, a force field is a 1-form on the configuration space, $F_T\in\Gamma(T^*\QQ)$. 
%
We will focus our interest on time-independent force fields, i.e., force fields induced by section of $T^*\Q$. To this end, 
define the extension map $\ext:\Gamma(T^*\Q)\to \Gamma(T^*\QQ)$,
\beq
\label{extension operator}
(\ext(F))_\kappa(w)=\frac{1}{|I|}\int_I F_{\kappa_t}(w_t)dt,\quad \kappa\in\QQ,\,w\in T_\kappa\QQ,
\eeq
where for every $t\in I$, $\kappa_t$ and $w_t$ were defined above. 
This extension is natural for the following reason: 
The inclusion $\iota_\Q$, defined by \eqref{eq:iotaQ}, induces a pullback of sections,
\[
\iota_\Q^\star:\Gamma(T^*\QQ)\to \Gamma(T^*\Q),
\]
defined by 
\[
(\iota_\Q^\star F)_\vp(u)=F_{\iota_\Q(\vp)}(T\iota_\Q(u)),\qquad  \vp\in\Q,\, u\in T_\vp\Q.
\]
A straightforward calculation shows that $\ext$ is a right-inverse for $\iota_\Q^\star$,
\beq\label{inclusion relation}
\iota_\Q^\star \circ \ext= Id_{\Gamma(T^*\Q)}.
\eeq

\begin{definition}
We say that a force field $F\in\Gamma(T^*\QQ)$ is \emph{time-independent} if $F=\ext(\Phi)$ for some $\Phi\in \Gamma(T^*\Q)$.
\end{definition}

Thus, a time-independent force field depends on time only through the time dependence of the configuration; moreover, by definition, its action on a virtual displacement $w\in T_\kappa\QQ$ does not involve explicitly the time derivative of $w$.

With the aid of the extension operator, we now show how the total force is composed from a loading and a constitutive relation:

\begin{definition}
A \emph{loading} is a $1$-form $\Phi:\Q\to T^*\Q$, assigning to every $\vp\in\Q\,$ a force $\Phi_\vp\in T^*_f\Q$.
A \emph{constitutive relation} is a section $\Psi:C^0(J^1(\B,\S))\to T^*C^0(J^1(\B,\S))$; its induced force field $(j^1)^*\Psi:\Q\to T^*\Q$ is given by
\[
((j^1)^*\Psi)_\vp(w)=\Psi_{j^1\vp}(j^1w).
\]
The total force at a given configuration $\kappa\in\QQ$ is given by 
\[
F_T = \ext(\Phi-(j^1)^*\Psi)\in\Gamma(T^*\QQ).
\]
\end{definition}

That is, for $\kappa\in\Q$ and $w\in T_\kappa\QQ\simeq C^1(\ks)$
\[
(F_T)_\kappa(w)= \frac{1}{|I|} \int_I \left(\Phi_{\kappa_t}(w_t)-\Psi_{j^1\kappa_t}(j^1w_t)\right)dt.
\]
Note that by the isomorphism \eqref{eq:iso5}, 
\[
((j^1)^*\Psi)_\vp\in T^*_{j^1\vp}C^0(J^1(\B,\S))\simeq (C^0(J^1(\vp^*T\S)))^*.
\]

We next restrict our attention to smooth loading and smooth constitutive relations, which are induced by densities in the form of sections of vector bundles over $\B\times\S$.
%

\begin{definition}
A loading $\Phi\in\Gamma(T^*\Q)$ is called \emph{smooth} if there exists a \emph{body loading density}
\[
b\in \Gamma(\Hom(\B\times T\S,\Lambda^dT^*\B\times\S)),
\] 
and a \emph{surface loading density} 
\[
\mathcal{T}\in\Gamma(\Hom(\partial\B\times T\S,\Lambda^{d-1}T^*\partial\B\times\S)),
\]
such that for every $\vp\in\Q$ and $v\in T_\kappa\Q\simeq C^1(\vp^*T\S)$
\[
\Phi_\vp(v)=\int_\B \vp^*b(v)+\int_{\partial\B}\vp^*\mathcal{T}(v).
\]
\end{definition}

Note that
\[
\vp^*b \in \Gamma(\Hom(\vp^*T\S,\Lambda^dT^*\B)),
\]
so that $\vp^*b(v)$ is a $d$-form on $\B$, as required.

\begin{definition}
A constitutive relation $\Psi$ is called \emph{smooth} if there exists a \emph{constitutive density},
\[
\psi\in \Gamma(\Hom(VJ^1(\B,\S),(\pi^1)^*\Lambda^dT^*\B)),
\]
such that for every $\vp\in\Q$ and $v\in T_\vp\Q$
\beq
\label{smooth stress}
\Psi_{j^1\vp}(j^1v)=\int_{ \B} ((j^1\vp)^*\psi)(j^1v).
\eeq
\end{definition}

Note that in \eqref{smooth stress} we used the canonical isomorphism \cite{KOS17b},
\[
(j^1\vp)^*VJ^1(M,N)\simeq J^1(\vp^*TN).
\]

\subsection{The equations of motion}

In this section we establish the equations of motion as a generalization of Newton's second law of classical mechanics. We view the equations as a section of $T^*\QQ$, thus, velocity, momentum and acceleration are defined as sections of $T\QQ$ or $T^*\QQ$. 

The \emph{velocity} $V\in \Gamma(T\QQ)$ is defined by 
 \[
 V_\kappa=\pd{\kappa}{t}:=T\kappa(\partial_t)\in T_\kappa\QQ.
 \]
The tangent map of $V$, 
\[
TV:T\QQ\to T^2\QQ
\]
can be computed explicitly. Let $\kappa\in\QQ$ and let $w\in T_\kappa\QQ$ be represented by a path $\gamma:(-\epsilon,\epsilon)\to \QQ$ satisfying $\gamma(0)=\kappa$ and $\dot{\gamma}(s)=w$. Then,  
 \[
 (TV)_\kappa(w)=\ds V_{\gamma(s)}=\ds \brk{\deriv{\gamma_s}{t}} =\deriv{}{t}\left(\left.\frac{d\gamma_s}{ds}\right|_{s=0}\right)=\deriv{w}{t}=Tw(\partial_t).
 \]
Note that we view $w$ as an element of $C^2(I\times\B,T\S)$ (with $\pi_\S(w)=\kappa$), hence, $dw/dt:I\times \B\to T^2\S$; moreover, $\pi_{T\S}\circ dw/dt=V_\kappa$. In other words, $dw/dt$ is an element of $C^1(V_\kappa^*T^2\S)\hookrightarrow C^1(I\times \B,T^2\S)$, consistent with the isomorphism
  \[
T^2_{V_\kappa}\QQ\simeq C^1(V_\kappa^*T^2\S).
 \]
 
Next, define the \emph{acceleration} $A\in\Gamma(T\QQ)$ by 
\[
A=\nabla^\QQ_VV=K^\QQ\circ TV(V).
\]
Let $\kappa\in\QQ$. Then, $A_\kappa\in C^1(\ks)$ is given by 
\beq
\label{Acc}
\begin{split}
A_\kappa&=K^\Q(T(V)(V))_\kappa=K^\S\circ \left((TV)_\kappa(V_\kappa)\right)\\
&=(K^\S\circ TV_\kappa(\partial_t))= (\kappa^*\nabla^\S)_{\partial_t}V_\kappa:=\frac{DV_\kappa}{dt},
\end{split}
\eeq
where the second equality follows from the definition of $K^\Q$, the third equality follows from the expression for $TV$, and the fourth equality follows from the definition of the pullback connection.

 
 The \emph{momentum} $P\in \Gamma(T^*\QQ)$  is the dual pairing of the velocity $V\in\Gamma(T\QQ)$ with respect to the metric $\GG$ defined in \eqref{metricforQ}, 
 \[
 P=\flat^\GG(V):=\GG(V,\cdot).
 \]
  For $\kappa\in\QQ$ and $w\in T_\kappa\QQ$,
 \[
 P(w)=\GG(V,w)=\int_{I\times\B}\kappa^*G(V_\kappa,w)\theta\wedge dt.
\]
The \emph{inertial force} $DP/dt \in\Gamma (T^*\QQ)$ is defined by
 \[
 \frac{DP}{dt}:=\nabla^{\QQ^*}_VP.
 \]
 where $\nabla^{\QQ^*}$ is the dual connection of $\nabla^\QQ$ for $T^*\QQ$:
given $\xi\in\Gamma(T\QQ)$, 
\[
\frac{DP}{dt}(\xi)=\left(\nabla^{\QQ^*}_VP\right)(\xi)=V\cdot P(\xi)-P(\nabla_V^\QQ\xi).
\]
 
\begin{proposition}
The inertial force is dual to the acceleration   
 \[
 \frac{DP}{dt}=\flat^{\GG}(A).
 \]
 \end{proposition}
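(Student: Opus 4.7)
The plan is to unravel all the definitions and then invoke the metricity of $\nabla^{\QQ}$ with respect to $\GG$, which was established in Theorem~\ref{thm:metricity}. Concretely, I would evaluate both sides of the proposed identity on an arbitrary vector field $\xi \in \Gamma(T\QQ)$ and show they agree.

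First, I would write the left-hand side using the definition of the dual connection:
\[
\frac{DP}{dt}(\xi) = (\nabla^{\QQ^*}_V P)(\xi) = V \cdot P(\xi) - P(\nabla^{\QQ}_V \xi).
\]
Then, using $P = \flat^{\GG}(V) = \GG(V,\cdot)$, this becomes
\[
\frac{DP}{dt}(\xi) = V \cdot \GG(V,\xi) - \GG(V, \nabla^{\QQ}_V \xi).
\]
Applying metricity of $\nabla^{\QQ}$ (Theorem~\ref{thm:metricity}) to expand $V \cdot \GG(V,\xi)$ gives
\[
V \cdot \GG(V,\xi) = \GG(\nabla^{\QQ}_V V, \xi) + \GG(V, \nabla^{\QQ}_V \xi),
\]
and the second summand cancels with the subtracted term, leaving
\[
\frac{DP}{dt}(\xi) = \GG(\nabla^{\QQ}_V V, \xi) = \GG(A,\xi) = \flat^{\GG}(A)(\xi),
\]
by the definition $A = \nabla^{\QQ}_V V$ in \eqref{Acc} and the definition of $\flat^{\GG}$. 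Since $\xi$ was arbitrary, the identity of $1$-forms follows.

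There is no real obstacle here: the result is essentially a restatement of metric compatibility in the specific case where both slots of $\GG$ are filled by the velocity field $V$. The only subtlety worth a remark is that the manipulation takes place in the infinite-dimensional Banach manifold $\QQ$, so one must be sure that $\nabla^{\QQ^*}$ is well-defined as the dual connection on $T^*\QQ$ and that its defining identity $(\nabla^{\QQ^*}_V P)(\xi) = V\cdot P(\xi) - P(\nabla^{\QQ}_V \xi)$ is meaningful. This is granted by the construction of $\nabla^{\QQ}$ in Section~\ref{sec:prelims} together with the weak Riemannian structure $\GG$, so no additional work is needed. Thus the proposition reduces to a short, direct computation.
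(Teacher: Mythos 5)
Your proof is correct and follows exactly the same route as the paper: evaluate $\frac{DP}{dt}$ on an arbitrary $\xi\in\Gamma(T\QQ)$, expand $V\cdot\GG(V,\xi)$ via the metricity established in Theorem~\ref{thm:metricity}, and cancel the $\GG(V,\nabla^\QQ_V\xi)$ term to obtain $\GG(A,\xi)$. Nothing further is needed.
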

 
 \begin{proof}
 Let $\xi\in \Gamma(T\QQ)$. By Theorem \ref{thm:metricity}, $\nabla^\QQ$ is metric with respect to $\GG$. Hence, 
  \[
 V\cdot P(\xi)=V\cdot \GG(V,\xi)=\GG(\nabla^\QQ_VV,\xi)+\GG(V,\nabla^\QQ_V\xi),
 \]
and
\[
\frac{DP}{dt}(\xi)=V\cdot P(\xi)-P(\nabla_V^\QQ\xi)=\GG(A,\xi)=\flat^\GG(A)(\xi). 
\]
 \end{proof}

The equations of motion equate the inertial force with the forces induced by loadings and constitutive relations,
\beq
\label{eq:motion}
\frac{DP}{dt}=\ext(\Phi-(j^1)^*\Psi). 
\eeq
It is an equation taking values in $T^*\QQ$; its solutions are configurations $\kappa\in\QQ$. Generally, \eqref{eq:motion} has to be augmented by initial conditions; boundary conditions are already incorporated in the loadings and the constitutive relations.

Loadings and  the constitutive relations may be singular, in which case \eqref{eq:motion} may not have a local differential form. If the loading and the constitutive relation are smooth,  then \eqref{eq:motion} at $\kappa$ transforms into 
\beq
\label{eq:motion1}
\begin{split}
\int_{I\times\B} \kappa^*G(A_\kappa,w)\theta\wedge dt=
\int_I\int_\B (\kappa_t^*b(w_t)+\divergence (j^1\kappa_t)^*\psi(w_t)) dt\\
+\int_I\int_{\partial\B} \brk{\kappa_t^*\mathcal{T}(w_t)-p_\sigma((j^1\kappa_t)^*\psi)(w_t)} dt, \quad \forall w\in T_\kappa\QQ.
\end{split}
\eeq
Since equation \eqref{eq:motion1} holds for every vector field $w$, we obtain the following differential system:
\beq
\label{eq:motion2}
\kappa^* G(A_\kappa,\cdot)\, \theta = \kappa^*b(w) +  \divergence (j^1\kappa)^*\psi,
\eeq
which is an identity of vector valued forms in $I\times\B$ together with boundary conditions 
\[
\mathcal{T}_\kappa=p_\sigma((j^1\kappa)^*\psi)|_{I\times\partial\B}. 
\]


A stationary configuration $\vp\in\Q$ is called  an \emph{equilibrium configuration} if 
\[
\Phi_\vp-((j^1)^*\Psi)_\vp=0
\]
or equivalently, if the constant motion $\iota_\Q(\vp)\in\QQ$ is a solution of \eqref{eq:motion}.
 In the smooth case, the equilibrium condition yields the boundary value problem
\[
\begin{split}
&\divergence (j^1\vp)^*\psi + \vp^*b  = 0\qquad \text{in}\,\B\\
&\mathcal{T}_\vp=p_\sigma((j^1\vp)^*\psi_\vp)\qquad\quad\,\,\,\,\text{on}\,\partial\B.
\end{split}
\]
\
\begin{remark}
The solution of the force free equation 
\[
\frac{DP}{dt}=\GG(A,\cdot)=0
\]
is a geodesic flow of $\B$ in $\S$. This is a covariant version of Newton's law of inertia for non-Euclidian continuum dynamics. 
\end{remark}

\subsection{The hyperelastic case}
\label{sec:hyper}
\begin{definition}
A constitutive relation $\Psi$ is called \emph{conservative} if the exists a differentiable function $U:C^0(j^1(\B,\S))\to \R$ such that for every $\vp\in\Q$ 
\[
\Psi_{j^1\vp}=(dU)_{j^1\vp}\in T^*_{j^1\vp}C^0(J^1(\B,\S))\simeq C^0(J^1(\vp^*T\S))^*. 
\]
A  constitutive relation $\Psi$ is called \emph{hyperelastic} if $\Psi$  is conservative and $U$ is of the form 
\[
U(j^1\vp)=\int_{\B}\LL(j^1\vp) \theta,
\]
where $\LL\in C^\infty(J^1(\B,\S))$ is a \emph{Lagrangian density function}.
\end{definition}
\begin{proposition}
\label{prop:hyper}
Let  $\Psi$ be a hyperelastic constitutive relation with Lagrangian destiny $\LL$. Then $\Psi$ is smooth and the constitutive density $\psi$ is given by 
\[
\psi=\delta\LL\otimes\theta
\] 
where $\delta\LL$ is the fiber derivative of $\LL$, i.e., the restriction of $d\LL$ to $VJ^1(\B,\S)$. Thus, for every $\vp\in\Q$, $(j^1\vp)^*\psi=\delta_{j^1\vp}\LL\otimes\theta$ where $\delta_{j^1\vp}\LL:=(j^1\vp)^*\delta\LL\in \Gamma(J^1(\vp^*T\S)^*)$. 
\end{proposition}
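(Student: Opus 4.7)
The plan is to compute the differential of $U$ directly on the ambient manifold $\mathcal{E}=C^0(J^1(\B,\S))$ and rearrange it into the form required by the definition of a smooth constitutive relation. First, observe that the formula $U(\eta)=\int_\B \LL(\eta)\,\theta$ makes sense for any $\eta\in\mathcal{E}$, not only for holonomic sections of the form $\eta=j^1\vp$. Thus $U$ extends naturally to a smooth function on $\mathcal{E}$ (smoothness follows from smoothness of $\LL$ and compactness of $\B$, which legitimizes differentiation under the integral sign).

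Next, I would compute $(dU)_\eta$ for $\eta\in\mathcal{E}$ and a tangent vector $\xi\in T_\eta\mathcal{E}\simeq C^0(\eta^*VJ^1(\B,\S))$. Choose a path $s\mapsto \eta_s\in\mathcal{E}$ with $\eta_0=\eta$ and $\tfrac{d}{ds}|_{s=0}\eta_s=\xi$. Since $\B$ is compact and $\LL$ is smooth, interchange of $d/ds$ and the integral gives
\[
(dU)_\eta(\xi)=\int_\B \brk{\left.\tfrac{d}{ds}\right|_{s=0}\LL(\eta_s(p))}\theta(p)=\int_\B (d\LL)_{\eta(p)}(\xi(p))\,\theta(p).
\]
Crucially, $\xi$ takes values in the vertical subbundle $VJ^1(\B,\S)$ (because all $\eta_s$ project to $\id_\B$ under $\pi^1$), so $(d\LL)_{\eta(p)}(\xi(p))=(\delta\LL)_{\eta(p)}(\xi(p))$ by the very definition of the fiber derivative $\delta\LL\in\Gamma(V^*J^1(\B,\S))$.

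I would then collect the result into the correct bundle-theoretic form. Viewing $\theta$ as pulled back via $\pi^1:J^1(\B,\S)\to\B$, the section $\delta\LL\otimes\theta$ belongs to $\Gamma(\Hom(VJ^1(\B,\S),(\pi^1)^*\Lambda^dT^*\B))$, which is precisely the bundle for a constitutive density. Pulling back along $\eta$, the above integral becomes $\int_\B (\eta^*(\delta\LL\otimes\theta))(\xi)$. Specializing to $\eta=j^1\vp$ and $\xi=j^1v$ under the canonical isomorphism $(j^1\vp)^*VJ^1(\B,\S)\simeq J^1(\vp^*T\S)$, we obtain
\[
\Psi_{j^1\vp}(j^1v)=(dU)_{j^1\vp}(j^1v)=\int_\B ((j^1\vp)^*\psi)(j^1v),\qquad \psi:=\delta\LL\otimes\theta,
\]
which matches the definition of a smooth constitutive relation and identifies $\psi$ as claimed, with $(j^1\vp)^*\psi=\delta_{j^1\vp}\LL\otimes\theta$.

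The only real obstacle is bookkeeping of bundle identifications: ensuring that $j^1v$ is properly identified with a vertical vector field along $j^1\vp$ via the canonical isomorphism of \cite{KOS17b}, that $\delta\LL$ is consistently regarded as a section of $\Hom(VJ^1(\B,\S),\R)$ rather than a differential on the total space, and that $\theta$ is tacitly pulled back by $\pi^1$ when tensored with $\delta\LL$. The analytic step (differentiation under the integral sign) is routine given the compactness of $\B$ and smoothness of $\LL$, and no further estimates are needed.
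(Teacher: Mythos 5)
The paper itself does not prove Proposition \ref{prop:hyper}; it defers entirely to the reference [KOS17b], so there is no in-paper argument to compare against. Your proposal is the natural and, as far as I can check, correct one: extend $U$ to all of $\mathcal{E}=C^0(J^1(\B,\S))$ by $U(\eta)=\int_\B\LL(\eta)\,\theta$, differentiate under the integral sign, use that tangent vectors to the space of sections are automatically vertical so that $d\LL$ restricts to $\delta\LL$, and then specialize to holonomic sections via the isomorphism $(j^1\vp)^*VJ^1(\B,\S)\simeq J^1(\vp^*T\S)$. The only point worth flagging is that the definition of a hyperelastic relation in the paper specifies $U$ only on holonomic sections $j^1\vp$, so your step of extending $U$ to all of $\mathcal{E}$ by the same formula is a (harmless and clearly intended) choice rather than something forced by the hypotheses; with that understood, the bundle bookkeeping you list at the end is indeed all that remains, and the argument goes through.
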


For a proof see \cite{KOS17b}. 

Locally, $\LL$ is represented by a function $\R^m\times\R^{d\times m}\to\R$, and for every $w\in T_\vp\Q$
\[
(j^1\vp)^*\psi(w^i,w^i_{,\alpha}) = (R_i w^i+\psi^\alpha_i w^i_{,\alpha}) \, \Vol.
\]
where
\beq\label{hyperstressformula}
R_i = \rho\,\pd{\LL}{y^i}\,(j^1\vp)
\textand 
\psi^\alpha_i = \rho\,\pd{\LL}{y^i_{,\alpha}}(j^1\vp),
\eeq
and $\rho$ is the mass density.
In the absence of external loadings the equation  of motion \eqref{eq:motion2} take the form
\begin{equation}\label{hyperdyncord}
G_{ij}\brk{\pdd{\kappa^i}{t}+\Gamma_{lk}^i\pd{\kappa^l}{t}\pd{\kappa^k}{t}} =
\frac{1}{\rho}\partial_\alpha\brk{\rho \pd{\LL}{y^j_{,\alpha}}(j^1\kappa)} - \pd{\LL}{y^j}(j^1\kappa),
\eeq
with boundary conditions
\[
\pd{\LL}{y^i_{,\alpha}}(j^1\kappa)\, (\partial_\alpha\inc\Vol) =0  
\qquad \text{on}\,I\times\partial\B.
\]
Eq. \eqref{hyperdyncord} is the equation of motion for the configuration $\kappa$ of a hyperelastic body in the absence of external loadings. It should supplemented by initial conditions $\kappa_0\in\Q$ and $V_0\in T_{\kappa_0}\Q$.


 \section{Linearization}
\label{sec:linearization}

We begin by defining the notion of linearization in a general context: 

\begin{definition}
\label{def:linearization}
Let $\pi: E\to M$ be a (possibly infinite dimensional) vector bundle, $\nabla$ a connection on $E$  and $s\in\Gamma(E)$. The \emph{linearization} of $p$ at a point $p\in M$, denoted by $L_{p}s\in\Aff(T_pM,E_p)$, is defined by
\[
L_ps(v):=s_p+ (\nabla_v s)_p\qquad  v\in T_pM .
\]
\end{definition}

Linearizations are used, in particular, in the following context: one seeks a solution $p\in M$ to the (generally nonlinear) equation
\[
s_p = 0.
\]
Instead, one takes an approximate solution $p_0$, and then solves the linear equation
\[
L_ps(v) = 0.
\]
Then, $p_1 = \exp_{p_0}(v)$ can be viewed as a next order iterate for the solution.

In our setting, $E=T^*\QQ$, $M=\QQ$ and the section $s\in \Gamma(T^*\QQ)$ is given by the equations of motion \eqref{eq:motion}
\[
s =\frac{DP}{dt} - \ext(\Phi- (j^1)^*\Psi) =0.
\]
In this case, $L_\kappa s \in \Aff(T_\kappa\QQ,T^*_\kappa\QQ)$. 
The linearized equation of motion at $\kappa\in\Q$ for $w\in T_\kappa\QQ$ is  
\beq
\label{linearizedeqofmotion}
L_\kappa\brk{\frac{DP}{dt}}(w) = L_\kappa(\ext(\Phi- (j^1)^*\Psi))(w).
\eeq
A solution $w\in T_\kappa\QQ$ for \eqref{linearizedeqofmotion} induces an approximate solution  $\kappa_1=\phi_\kappa(w)\in\QQ$ to \eqref{eq:motion},  where $\phi_\kappa$ is a canonical chart at $\kappa$. 

Note that a solution $w$ of \eqref{linearizedeqofmotion} saitsfies
\beq
\label{linearizedeqofmotio2}
L_\kappa\brk{\frac{DP}{dt}}(w)(\xi) = L_\kappa(\ext(\Phi- (j^1)^*\Psi))(w)(\xi),
\eeq
for every $\xi\in T_\kappa\QQ$. In order to compute \eqref{linearizedeqofmotio2} explicitly, one needs to consider a local extension of $\xi$, that is a local section $\tilde{\xi}\in\Gamma(T\QQ)$ satisfying $\tilde{\xi}_\kappa=\xi$ and the same value is obtained regardless of how $\xi$ is extended in a vicinity of $\kappa$. Noting that $\xi\in C^1(\ks)$ is a vector field along $\kappa$, we may extend $\xi$ to a vector field on $\S$, $\hxi\in \Gamma(T\S)$. In particular, $\xi=\kappa^*\xi^\S$. Thus, it suffices  to impose that \eqref{linearizedeqofmotio2} be satisfied for $\xi$ of the form
\[
\xi: \kappa \mapsto \kappa^*\hxi,
\]
where $\hxi\in \Gamma(T\S)$.

\subsection{Linearization of acceleration term}


Let $\kappa\in\QQ$ and  $w\in T_\kappa\QQ$, then
\[
L_\kappa\brk{\frac{DP}{dt}}(w)=L_\kappa(\flat^{\GG}(A))(w)=\flat^{\GG}(A_\kappa)+\nabla^{\QQ^*}_w\left(\flat^{\GG}(A)\right).
\]
We therefore turn to compute $\nabla^{\QQ^*}_w\left(\flat^{\GG}(A)\right)$. Let  $\xi\in \Gamma(T\QQ)$, then by the metricity of $\GG$,
\[
\begin{split}
\nabla^{\QQ^*}_w\left(\flat^{\GG}(A)\right)(\xi) &= \left(w\cdot \flat^{\GG}(A)(\xi)\right)_\kappa-\flat^{\GG}(A)(\nabla^\QQ_w\xi) \\
&=\left(w\cdot \GG(A,\xi)\right)_\kappa-\GG(A,\nabla^\QQ_w\xi)_\kappa \\
&=\GG(\nabla^\QQ_wA,\xi)_\kappa.
\end{split}
\]
In other words, metricity implies that 
\[
\nabla^{\QQ^*}_w\left(\flat^{\GG}(A)\right)=\flat^{\GG}(\nabla^\QQ_wA).
\]
It remains to compute $\nabla_w^\QQ A$. 
%
Let $\gamma:(-\epsilon,\epsilon)\to\QQ$ be a curve representing $w\in T_\kappa\QQ$, that is $\gamma(0)=\kappa$ and $\dot{\gamma}(0)=w$. Then,
\[
\nabla_w^\QQ A=K^\QQ\circ TA_\kappa(w)=K^\S\circ \ds A_{\gamma(s)}.
\]
Hence by the definition of the pullback connection $\gamma^*\nabla^\S$,
\[
\begin{split}
\nabla_w^\QQ A&=\ds\frac{D}{dt}\frac{d{\gamma}}{dt}=\\
&=\frac{D^2}{dt^2}\brk{\left.\frac{d\gamma}{ds}\right|_{s=0}} -
R^\S\brk{\left.\frac{d\gamma}{dt}\right|_{s=0},\left.\frac{d\gamma}{ds}\right|_{s=0},
\left.\frac{d\gamma}{dt}\right|_{s=0}}= \\
&=\frac{D^2w}{dt^2}+R^\S(w,V_\kappa,V_\kappa),
\end{split}
\]
where $R^\S$ is the curvature tensor corresponding to $\nabla^\S$,
\[
R^\S(X,Y,Z)=\nabla^\S_X\nabla^\S_YZ-\nabla^\S_Y\nabla^\S_XZ-\nabla^\S_{[X,Y]}Z, \quad X,Y,Z\in\Gamma(T\S).
\]
To conclude, for every $\kappa\in\QQ$ and $w\in T_\kappa\QQ\simeq C^1(\ks)$, 
\beq
\label{dyn-lin}
L_\kappa\brk{\frac{DP}{dt}} (w)=\flat^\GG(A_\kappa)+\flat^\GG\brk{\frac{D^2w}{dt^2}+R^\S(w,V_\kappa,V_\kappa)}.
\eeq
In other words, the linearization of acceleration term is the Jacobi equation. 

\subsection{Linearization of force}

We now turn to linearize the right-hand side of the equations of motion \eqref{eq:motion}. Without loss of generality, we may assume that there are no external loadings, $\Phi=0$, as the loading may be incorporated into the constitutive relation $\Psi$. Thus the total force is given by,
\beq
F_\kappa(u) = \ext((j^1)^*\Psi)_\kappa(u)
=\frac{1}{|I|}\int_I \Psi_{j^1\kappa_t}(j^1u_t)\, dt,\quad \kappa\in\QQ,\,u\in T_\kappa\QQ.
\label{eq:yet_another_F}
\eeq
Then,
\[
L_\kappa(F)(w)=F_\kappa+ (\nabla^{\QQ^*}_wF\left.\right)_\kappa,
\]
where by definition, for $\xi\in\Gamma(T\QQ)$,
\[
(\nabla^{\QQ^*}_wF)(\xi)= w(F(\xi)) -F(\nabla^\QQ_w\xi).
\]

\begin{lemma}
Let $F\in\Gamma(T^*\QQ)$ be given by \eqref{eq:yet_another_F}. Then, for every vector field $\xi\in\Gamma(T\QQ)$, 
\beq
\label{lin:12}
L_\kappa(F)(w)(\xi)=\frac{1}{|I|}\int_I L^\Q_{\kappa_t}((j^1)^*\Psi)(w_t)((\xi_{\kappa})_t)\,dt,
\eeq
where  the linearization on the right-hand side takes place in the space of stationary configurations $\Q$.
\end{lemma}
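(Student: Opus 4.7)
The lemma reduces a linearization computation on the infinite-dimensional configuration space $\QQ$ to a time-integral of pointwise linearizations on the stationary configuration space $\Q$. The plan is to expand both sides via \defref{def:linearization}, observe that the ``value'' parts match directly from the definition of $F$ via extension, and reduce the identity to a compatibility between the connections $\nabla^\QQ$ and $\nabla^\Q$. Explicitly,
\[
L_\kappa(F)(w)(\xi) = F_\kappa(\xi_\kappa) + w\cdot F(\xi) - F_\kappa(\nabla^\QQ_w \xi),
\]
while the integrand on the right-hand side of the claimed identity is
\[
((j^1)^*\Psi)_{\kappa_t}((\xi_\kappa)_t) + \brk{\nabla^{\Q^*}_{w_t}((j^1)^*\Psi)}_{\kappa_t}((\xi_\kappa)_t).
\]
By definition \eqref{eq:yet_another_F} of $F$, the identity $F_\kappa(\xi_\kappa) = \tfrac{1}{|I|}\int_I ((j^1)^*\Psi)_{\kappa_t}((\xi_\kappa)_t)\, dt$ disposes of the value parts, leaving me to match the derivative parts.

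To compute $w\cdot F(\xi)$ I would represent $w$ by a curve $\gamma:(-\e,\e)\to\QQ$ with $\gamma(0)=\kappa$ and $\dot\gamma(0)=w$, and for each $t\in I$ form the slice $\gamma^t(s):=\gamma(s)_t$, a curve in $\Q$ with $\gamma^t(0)=\kappa_t$ and $\dot\gamma^t(0)=w_t$. The assignment $s\mapsto (\xi_{\gamma(s)})_t$ is a vector field along $\gamma^t$, which I extend to a smooth local vector field $\tilde\xi_t$ on $\Q$ near $\kappa_t$ with $(\tilde\xi_t)_{\gamma^t(s)}=(\xi_{\gamma(s)})_t$. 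Interchanging $\ds$ with $\int_I$, valid on the compact interval under the smoothness hypotheses, yields
\[
w\cdot F(\xi) = \ds F_{\gamma(s)}(\xi_{\gamma(s)}) = \frac{1}{|I|}\int_I w_t\cdot ((j^1)^*\Psi)(\tilde\xi_t)\, dt.
\]

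The remaining ingredient, and the only real obstacle, is the pointwise compatibility $(\nabla^\QQ_w\xi)_t = \nabla^\Q_{w_t}\tilde\xi_t$. This follows essentially from the construction of $\nabla^\QQ$: by \eqref{connectionmapformula} and \eqref{conformula}, $K^\QQ$ is just $K^\S$ postcomposed pointwise over $I\times\B$, so $(\nabla^\QQ_w\xi)_\kappa(t,\cdot) = K^\S\circ (T\xi)_\kappa(w)(t,\cdot)$. A chain-rule computation along $\gamma$ gives $(T\xi)_\kappa(w)(t,\cdot) = \ds\xi_{\gamma(s)}(t,\cdot) = \ds(\tilde\xi_t)_{\gamma^t(s)} = (T\tilde\xi_t)_{\kappa_t}(w_t)$, and applying $K^\S$ yields the decomposition. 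Consequently $F_\kappa(\nabla^\QQ_w\xi) = \tfrac{1}{|I|}\int_I ((j^1)^*\Psi)_{\kappa_t}(\nabla^\Q_{w_t}\tilde\xi_t)\, dt$, and subtracting from the expression for $w\cdot F(\xi)$ produces the integrand $\brk{\nabla^{\Q^*}_{w_t}((j^1)^*\Psi)}_{\kappa_t}((\xi_\kappa)_t)$. The value is independent of the chosen extension $\tilde\xi_t$, since the covariant derivative of a section of $T^*\Q$ at a point depends only on the direction vector at the basepoint; this completes the argument.
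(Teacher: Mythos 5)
Your proof is correct and follows essentially the same route as the paper's: both match the constant parts via the definition of the extension operator, slice the representing curve in time so as to interchange $\ds$ with $\int_I$, and identify $(\nabla^\QQ_w\xi)_t$ with a $\nabla^\Q$-covariant derivative at $\kappa_t$ using the pointwise-composition structure of the connection maps $K^\QQ$ and $K^\S$. The only cosmetic difference is that the paper restricts to test fields of the form $\kappa\mapsto\kappa^*\hxi$ with $\hxi\in\Gamma(T\S)$ (justified beforehand by tensoriality in $\xi$), whereas you work with an arbitrary local extension $\tilde\xi_t$ of the sliced field and invoke independence of the extension at the end.
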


\begin{proof}
The constant part of the identity is immediate since $F$ is the extension of $(j^1)^\star\Psi$.  
To proceed as noted above, it suffices to consider vector field $\xi$  of the form $\kappa\mapsto \kappa^*\hxi$, where $\hxi\in \Gamma(T\S)$. Note also that the mapping $\vp\mapsto \vp^*\hxi$ for $\vp\in\Q$ is a section of $T\Q$, which we denote by $\xiQ$. Then, $(\xi_\kappa)_t = \xiQ_{\kappa_t}$.

It remains to show the identity of the linear parts: that for every $\hxi\in\Gamma(T\S)$,
\beq
w(F(\xi)) = \frac{1}{|I|}\int_I w_t\brk{(j^1)^\star\Psi(\xiQ)}\, dt,
\label{eq:show1}
\eeq
and
\beq
F(\nabla^\QQ_w\xi) = \frac{1}{|I|}\int_I(j^1)^\star\Psi(\nabla^\Q_{w_t}\xiQ)\,dt.
\label{eq:show2}
\eeq

To show \eqref{eq:show1}, let $\gamma:(-\e,\e)\to \QQ$ satisfy $\gamma(0)=\kappa$ and $\dot{\gamma}(0)=w$, and let $\gamma_t:(-\e,\e)\to\Q$ be the evaluation of $\gamma$ at time $t$, so that  $\dot{\gamma}_t(0) = w_t\in T_{\kappa_t}\Q$. Then,
\[
\begin{split}
w(F(\xi)) &= \ds \left(\frac{1}{|I|}\int_I\Psi_{j^1\gamma_t(s)}(j^1\xiQ_{\gamma_t(s)})\, dt\right)\\
&=  \left(\frac{1}{|I|}\int_I \ds\Psi_{j^1\gamma_t(s)}(j^1\gamma_t(s)^* \hxi)\, dt\right) \\
&= \frac{1}{|I|}\int_I w_t\brk{(j^1)^\star\Psi(\xiQ)}.
\end{split}
\]

To show \eqref{eq:show2}, 
we first simplify the term $\nabla^\QQ_w\xi$. By the chain rule,
\[
(T\xi)_\kappa(w)= \ds \xi_{\gamma(s)} = \ds \gamma(s)^*\hxi = \kappa^*T\hxi\circ w,
\]
which is an identity in $T^2_{\xi_\kappa}\QQ \simeq C^1(\xi_\kappa^*T^2\S)$.
Hence,
\beq
\nabla^\QQ_w\xi = K^\QQ( (T\xi)_\kappa(w))=K^\S\circ   \kappa^*T\hxi\circ w = \kappa^*(\nabla^\S\hxi)(w),
\label{eq:will_be_needed}
\eeq
where $\kappa^*(\nabla^\S\hxi)\in\Gamma(\Hom(\ks,\ks))$ is the pullback of $\nabla^\S\hxi\in \Gamma(\Hom(T\S,T\S))$. For $t\in I$,
\[
(\nabla^\QQ_w\xi )_t = \kappa_t^*(\nabla^\S\hxi)(w_t)  =\nabla^\Q_{w_t}\xiQ,
\]
where the last equality follows from the calculation yielding \eqref{eq:will_be_needed} over $\B$, rather than $I\times\B$. Then,
\[
\begin{split}
F(\nabla^\QQ_w\xi) &= \frac{1}{|I|}\int_I(j^1)^\star\Psi((\nabla^\QQ_w\xi)_t) \, dt\\
&= \frac{1}{|I|}\int_I(j^1)^\star\Psi(\nabla^\Q_{w_t}\xiQ)\,dt,
\end{split}
\]
which concludes the proof.
\end{proof}
%
%
%
%

For every $\hxi\in\Gamma(T\S)$ and $\epsilon>0$ sufficiently small, consider the vector field $\xi_\e\in \Gamma(T\QQ)$ given by $\kappa\mapsto \chi_\epsilon(\kappa^*\hxi)$, where $\chi_\epsilon:I\times\B\to \R$ is a smooth cutoff function supported on $(-\epsilon,\epsilon)\times\B$. By evaluating \eqref{lin:12} and \eqref{dyn-lin} at $\xi_\e$ and letting $\e\to 0$ the linearized equations of motion \eqref{linearizedeqofmotio2} can be localized in time:

\begin{corollary}
\label{corr:local_in_time}
For a time-independent force induced by a constitutive relation $\Psi$ and zero loading, the linearized equations of motion \eqref{linearizedeqofmotio2} is local in time; $w\in T_\kappa\QQ$ solves \eqref{linearizedeqofmotio2} if and only if, for every $t\in I$,
\[
\int_\B \G\brk{(A_\kappa)_t + \frac{D^2w}{dt^2}+R^\S(w_t,(V_\kappa)_t,(V_\kappa)_t),\cdot}\theta  =  L^\Q_{\kappa_t}((j^1)^*\Psi)(w_t)
\]
which is an equality of co-vectors in $T^*_{\kappa_t}\Q$. 
\end{corollary}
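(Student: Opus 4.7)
The plan is to prove the ``iff'' by localizing the integrated identity \eqref{linearizedeqofmotio2} in time through a shrinking bump function, exactly along the lines indicated in the paragraph preceding the corollary. One direction (pointwise equality $\Rightarrow$ integrated form) is essentially Fubini: if the claimed equality of co-vectors in $T^*_{\kappa_t}\Q$ holds for every $t\in I$, one pairs both sides with $\kappa_t^*\hxi$ for arbitrary $\hxi\in\Gamma(T\S)$, integrates in $t$, and reassembles the full expressions \eqref{dyn-lin} and \eqref{lin:12}; as noted in the excerpt, it suffices to test against $\xi\in\Gamma(T\QQ)$ of the form $\kappa\mapsto\kappa^*\hxi$.

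For the nontrivial direction, fix $t_0\in I$ and an arbitrary $\zeta\in T_{\kappa_{t_0}}\Q\simeq C^1(\kappa_{t_0}^*T\S)$. I would first extend $\zeta$ to a vector field $\hxi\in\Gamma(T\S)$ satisfying $\hxi\circ\kappa_{t_0}=\zeta$. Since $\kappa_{t_0}$ is a $C^1$-embedding of the compact body $\B$, this is standard: a tubular neighborhood of $\kappa_{t_0}(\B)\subset\S$ together with a smooth cutoff on $\S$ produces the required $\hxi$. With $\hxi$ in hand, introduce the test section $\xi_\epsilon\in\Gamma(T\QQ)$ defined by $\xi_\epsilon(\kappa)=\chi_\epsilon(\kappa^*\hxi)$, where $\chi_\epsilon\colon I\to\R$ is a smooth bump supported in $(t_0-\epsilon,t_0+\epsilon)$ normalized to integrate to $1$.

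Plugging $\xi_\epsilon$ into \eqref{linearizedeqofmotio2}, \eqref{dyn-lin} and Fubini convert the left-hand side into
\[
\int_I \chi_\epsilon(t)\Brk{\int_\B \kappa_t^*G\brk{(A_\kappa)_t+\frac{D^2w}{dt^2}+R^\S(w_t,(V_\kappa)_t,(V_\kappa)_t),\hxi\circ\kappa_t}\theta}dt,
\]
while \eqref{lin:12} converts the right-hand side into $\frac{1}{|I|}\int_I\chi_\epsilon(t)\,L^\Q_{\kappa_t}((j^1)^*\Psi)(w_t)(\kappa_t^*\hxi)\,dt$. As $\epsilon\to 0$, the bump $\chi_\epsilon$ acts as a Dirac delta at $t_0$; continuity in $t$ of each integrand (a consequence of the $C^1$-regularity of $\kappa$ and $w$, together with the smoothness of $\Psi$, $G$, and $\nabla^\S$) yields in the limit the pointwise identity at $t_0$ paired against $\zeta$ (up to the overall normalization $1/|I|$ inherent in $\ext$). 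Since $t_0\in I$ and $\zeta\in T_{\kappa_{t_0}}\Q$ were arbitrary, the desired co-vector equality in $T^*_{\kappa_t}\Q$ holds for every $t$.

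The only step that is more than bookkeeping is the extension $\zeta\mapsto\hxi$; once this is in place via the tubular neighborhood construction, the remainder reduces to a routine Lebesgue-differentiation argument applied to continuous integrands on $I$.
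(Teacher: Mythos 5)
Your argument is essentially the paper's own: the corollary is obtained there precisely by evaluating \eqref{dyn-lin} and \eqref{lin:12} on test sections $\kappa\mapsto\chi_\epsilon(\kappa^*\hxi)$ with a shrinking temporal cutoff and passing to the limit $\epsilon\to 0$, which is exactly your bump-function/Lebesgue-differentiation step. Your added details — the easy converse via Fubini, the tubular-neighborhood extension of an arbitrary $\zeta\in T_{\kappa_{t_0}}\Q$ to $\hxi\in\Gamma(T\S)$, and the observation about the $1/|I|$ normalization coming from $\ext$ — are correct fillings-in of what the paper leaves implicit.
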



In view of \eqref{lin:12}, we need to calculate linearizations of the form
\[
L_\vp((j^1)^\star\Psi)(v),
\]
where $\vp\in\Q$ and $v\in T_\vp\Q$. We focus on the case where $\Psi$ is smooth, given by the constitutive density
\[
\psi\in  \Gamma(\Hom(VJ^1(\B,\S),(\pi_1)^*\Lambda^dT^*\B)).
\]
For every vector field $\xiQ\in\Gamma(T\Q)$,
\[
\begin{split}
L_\vp((j^1)^\star\Psi)(v)(\xiQ) &= 
\int_\B\psi_{j^1\vp}(j^1\xiQ_\vp)+v \left(\int_\B\psi_{j^1\vp}(j^1\xiQ_{\vp})\right)
-\int_\B\psi_{j^1\vp}(j^1(\nabla^\Q_v\xiQ)_\vp)\\
&= - \int_\B \divergence(\psi_{j^1\vp})\left(\xiQ_{\vp}\right)
+ \int_{\partial\B}p_\sigma(\psi_{j^1\vp})\left(\xiQ_{\vp}\right)
 \\
&\quad+\int_\B \divergence(\psi_{j^1\vp})\left((\nabla^\Q_v\xiQ)_{\vp}\right)
-v \left(\int_\B\divergence(\psi_{j^1\vp})(\xiQ_{\vp})\right)  \\
&\quad-\int_{\partial\B}p_\sigma(\psi_{j^1\vp})\left((\nabla^\Q_v\xiQ)_{\vp}\right) +
v\left(\int_{\partial\B}p_\sigma(\psi_{j^1\vp})(\xiQ_{\vp})\right),
\end{split}
\]
where we substituted the decomposition \eqref{stressrep} of $\psi$ into a divergence term and a boundary term.

To further simply the last equation, we note that for $\xiQ$ of the form $\vp\mapsto \vp^*\hxi$, in which case
\[
(\nabla^\Q_v\xi)_\vp=(\vp^*\nabla^\S\hxi)(v).
\]
Moreover, the equation is tensorial in $\xiQ$, so that it can be represented as
\beq
\label{lin:11}
L^\Q_\vp((j^1)^*\Psi)(v)(\xiQ) = ((j^1)^*\Psi)_\vp(\xiQ)  +
\int_\B \calA(\vp,v)(\xiQ)+\int_{\partial\B}\calB(\vp,v)(\xiQ),
\eeq
where $\calA(\vp,v)$ is a $d$-form valued in $(\kappa^*T\S)^*$ and $\calB(\vp,v)$ is a $(d-1)$-form valued in the same vector bundle.
At this stage and generality, $\calA(\vp,v)$ and $\calB(\vp,v)$ cannot be significantly simplified. We therefore turn to calculate their local representatives in a given coordinate chart.

\subsection{Local form of the linearized equations of motion}

Substituting \eqref{lin:11} into \corrref{corr:local_in_time}, we obtain linearized equations of the motion in local form,
\beq
\label{linearization_localform}
\flat^{\kappa^*\G}\brk{A_\kappa + \frac{D^2w}{dt^2}+R^\S(w,V_\kappa,V_\kappa)}\otimes\theta  = 
- \divergence(\psi_{j^1\kappa}) + \calA(\kappa,w)
\eeq
in $I\times\B$ and
\[
p_\sigma(\psi_{j^1\kappa}) +  \calB(\kappa,w) = 0
\]
in $I\times\partial\B$.

If $\kappa$ is a solution of \eqref{eq:motion}, then only the terms that are linear in $w$ remain. 
In the particular case where $\kappa = \iota_\Q(\vp)$ is a stationary solution of \eqref{eq:motion}, $V_\kappa=0$, hence 
\[
\flat^{\kappa^*\G} \brk{\frac{D^2w}{dt^2}} \otimes\theta  = \calA(\kappa,w) 
\qquad\text{in $I\times\B$},
\]
and
\[
\calB(\kappa,w) = 0 \qquad\text{in $I\times\partial\B$}.
\]
Moreover, since $\divergence(\psi_{j^1\vp})=0$ and $p_\sigma(\psi_{j^1\vp})=0$, the implicit expressions for $\calA$ and $\calB$ reduce to
\[
\begin{split}
\int_\B \calA(\vp,w)(\xiQ) &= 
-v \left(\int_\B\divergence(\psi_{j^1\vp})(\xiQ_{\vp})\right)  \\
\int_\B \calB(\vp,w)(\xiQ) &= v\left(\int_{\partial\B}p_\sigma(\psi_{j^1\vp})(\xiQ_{\vp})\right).
\end{split}
\]

\subsection{Coordinate representation}

We hereby give a local expression for the terms $\calA(\kappa,w)$ and $\calB(\kappa,w)$ in \eqref{linearization_localform} for the general case. 
For $(t,p)\in I\times\B$ let $x^\alpha:U_p\subset\B\to\R$ ($1\leq\alpha\leq d$) and $y^i:V_{\kappa(t,p)}\subset \S\to\R$ ($1\leq i\leq m$) be coordinate charts for $\B$ and $\S$ at $p$ and $\kappa(t,p)$ respectively such that $\kappa(I\times U_p)\subset V_{\kappa(t,p)}$. Then $\kappa\left.\right|_{I\times U_p}$ is represented by 
\[
\kappa^i=y^i\circ\kappa\circ (\id ,x^{-1}):I\times \R^d\to \R.
\]
$w\in C^1(\ks)$ then has the local form 
\[
w=w^i\, \kappa^*\partial_{y^i}
\]
where $w^i:I\times\B\to\R$ and $\{\partial_{y^i}\}_{i=1}^m$ is the local frame for $T\S$ induces by the charts $y^i$. 
With a slight abuse of notation, let
\[
(x^\alpha,y^i,A^i_\alpha):J^1(U_p,V_{\vp(p)})\subset J^1(\B,\S)\to\R^d\times \R^m\times \R^{d\times m}
\]
be the induced coordinate chart for $J^1(\B,\S)$. That is, for $j^1_qf\in J^1(U_p,V_{\vp(p)})$, 
\[
 x^\alpha(j^1_qf)=x^\alpha(q),\quad 
 y^i(j^1_qf)=y^i(f(q))\textand A^i_\alpha(j^1_qf)=\pd{f^i}{x^\alpha}(x(q)).
 \] 

The variational stress density $\psi\in\Gamma(L(VJ^1(\B,\S),\Lambda^dT^*\B)$ has the local form 
\[
\psi=(\psi^\alpha_i\,dA^i_\alpha+R_j\, dy^j)\Vol
\]
where $\Vol=dx^1\wedge\cdot\wedge dx^d$ and $\psi^\alpha_i,R_j:J^1(U_p,V_{\vp(p)})\to\R$.
hence, 
\[
\psi_{j^1\kappa_t}(j^1w_t)=\left(\psi^\alpha_i(j^1\kappa_t)\, w^i_{,\alpha}+
R_j(j^1\kappa_t)\, w^j   \right)\Vol.
\]
Finally, denote by $\Gamma^i_{jk}:V_{\vp(p)}\to\R$ the christoffel symbols of $\nabla^\S$.
 A straightforward calculation then gives:

Let $\Psi$ be a smooth constitutive relation represented by a constitutive density $\psi$ and let $\kappa\in\QQ$. Then the linearization of the equation of motion $\frac{DP}{dt}=\ext((j^1)^*\Psi)$, at $\kappa\in\QQ$, 
\[
L_\kappa\brk{\frac{DP}{dt}}(w)= L_\kappa(\ext((j^1)^*\Psi))(w),
\qquad w\in T_\kappa\QQ
\]
has the  local form,
\beq\label{eq9}
\begin{split}
\rho G_{ij} \left(A_\kappa^i+\frac{D^2w^i}{dt^2}+R^i_{hkl}\pd{\kappa^h}{t}\pd{\kappa^k}{t}\pd{w^l}{t}\right) &
= \calA^1(\kappa_t)_{ij}w^i+
\calA^2(\kappa_t)^\delta_{ij}w^i_{,\delta} + \calA^3(\kappa_t)^{\alpha\beta}_{lj}w^l_{,\alpha\beta}\\
&\hspace{-3cm}+ (\divergence\psi_{\kappa_t})_k(w^i \Gamma^k_{ij} - (\divergence\psi_{\kappa_t})_j),
\end{split}
\eeq 
in $I\times\B$, and
\[ 
\left(\psi^\alpha_i(j^1\left(1-w^l\Gamma^i_{lj}\right) + 
\pd{\psi^\alpha_j}{y^k} w^k+\pd{\psi^\alpha_j}{A^k_\beta} w^k_{,\beta}\right) (\iota_{\partial_\alpha}\Vol)\left.\right|_{\partial\B}
\]
on $I\times\partial\B$.
The function $\calA_1,\calA_2$ and $\calA_3$ are given by 
\[
\begin{split}
\calA^1(\vp)_{ij} &= \pddm{\psi^\alpha_j}{y^i}{x^\alpha}  + \pddm{\psi^\alpha_j}{y^i}{y^l}\vp^l_{,\alpha}+\pddm{\psi^\alpha_j}{y^i}{A^l_\beta}\vp^l_{,\alpha\beta}-\pd{R_j}{y^i},\\
\calA^2(\vp)^\delta_{ij}&=\pddm{\psi^\alpha_j}{A^i_\delta}{x^\alpha}+\pddm{\psi^\alpha_j}{A^i_\delta}{y^l}\vp^l_{,\alpha}  +  \pd{\psi^\delta_j}{y^i}   +   \pddm{\psi^\alpha_j}{A^i_\delta}{A^l_\beta}\vp^l_{,\alpha\beta}   -   \pd{R_j}{A^i_\delta},   
\end{split}
\]
and
\[
\calA^3(\vp)^{\alpha\beta}_{lj}=\pd{\psi^\alpha_j}{A^l_\beta}.
\]
In these equations, the entries $G_{ij}$, $R^i_{hkl}$ and $\Gamma^k_{ij}$ of the metric, the curvature and the connection are evaluated at $\kappa_t$; the entries $\psi_j^\alpha$, $R_j$ of the constitutive density and their derivatives are evaluated at $j^1\kappa_t$. 


\paragraph{Acknowledgments}
This research was partially funded by the Israel Science Foundation (Grant No. 1035/17), and by a grant from the Ministry of Science, Technology and Space, Israel and the Russian Foundation for Basic Research, the Russian Federation.

 \bibliographystyle{amsalpha}
\bibliography{/Users/elikolami/Desktop/Latex/MyBibs}

\providecommand{\bysame}{\leavevmode\hbox to3em{\hrulefill}\thinspace}
\providecommand{\MR}{\relax\ifhmode\unskip\space\fi MR }
\providecommand{\MRhref}[2]{%
  \href{http://www.ams.org/mathscinet-getitem?mr=#1}{#2}
}
\providecommand{\href}[2]{#2}
\begin{thebibliography}{KOS17b}

\bibitem[AA08]{YO08}
Yavari. Arash and Ozakin. Arkadas, \emph{Covariance in linearized elasticity},
  Zeitschrift f{\"u}r angewandte Mathematik und Physik \textbf{59} (2008),
  no.~6, 1081--1110.

\bibitem[Eli67]{Eli67}
H.I. Eliasson, \emph{Geometry of manifolds of maps}, J. Diff. Geom. \textbf{1}
  (1967), 169--194.

\bibitem[GLM13]{GPM13}
Nastasia Grubic, Philippe~G. LeFloch, and Cristinel Mardare, \emph{Mathematical
  elasticity theory in a riemannian manifold}, arXiv:1312.3599 (2013).

\bibitem[Gur73]{Gur73}
Morton~E. Gurtin, \emph{The linear theory of elasticity}, pp.~1--295, Springer
  Berlin Heidelberg, Berlin, Heidelberg, 1973.

\bibitem[KOS17a]{KOS17a}
Raz Kupferman, Elihu Olami, and Reuven Segev, \emph{Continuum dynamics on
  manifolds: Application to elasticity of residually-stressed bodies}, Journal
  of Elasticity \textbf{128} (2017), no.~1, 61--84.

\bibitem[KOS17b]{KOS17b}
Raz Kupferman, Elihu Olami, and Reuven Segev, \emph{Stress theory for classical
  fields}, Mathematics and Mechanics of Solids (2017).

\bibitem[Lan99]{Lan99}
Serge Lang, \emph{Fundamentals of differential geometry}, 1 ed., Graduate Texts
  in Mathematics, springer-Verlag New-York, 1999.

\bibitem[Mau16]{Mau16}
G{\'e}rard~A. Maugin, \emph{Continuum mechanics through the ages - from the
  renaissance to the twentieth century}, 1 ed., Solid Mechanics and its
  Applications, Springer, 2016.

\bibitem[MH83]{MH83}
J.E. Marsden and T.J.R. Hughes, \emph{Mathematical foundations of elasticity},
  Dover, 1983.

\bibitem[Mic80]{Mic80}
P.W. Michor, \emph{Manifolds of differentiable mappings}, Shiva Pub., 1980.

\bibitem[Pal68]{Pal68}
R.~Palais, \emph{Foundations of global non-linear analysis}, Benjamin and Co.,
  New York, 1968.

\bibitem[PT01]{PT01}
P.~Piccione and D.~V. Tausk, \emph{On the banach differential structure for
  sets of maps on non-compact domains}, Nonlinear Analysis: Theory, Methods and
  Applications \textbf{46} (2001), no.~2, 245--265.

\bibitem[Seg86]{Seg86}
R.~Segev, \emph{Forces and the existence of stresses in invariant continuum
  mechanics}, J. Math. Phys. \textbf{27} (1986), 163--170.

\bibitem[Seg02]{Seg02}
\bysame, \emph{Metric-independent analysis of the stress-energy tensor}, J.
  Math. Phys. \textbf{43} (2002), 3220--3231.

\bibitem[Seg13]{Seg13}
\bysame, \emph{Notes on metric independent analysis of classical fields}, Math.
  Meth. Appl. Sci. \textbf{36} (2013), 497--566.

\end{thebibliography}

\end{document}